\begin{document}

\newcounter{save}\setcounter{save}{\value{section}}
{\def\addtocontents#1#2{}%
\def\addcontentsline#1#2#3{}%
\def\markboth#1#2{}%
\title{Energy-Efficient Scheduling with Time and Processors Eligibility Restrictions}

\author{Xibo Jin, Fa Zhang, Ying Song, Liya Fan and Zhiyong Liu}

\institute{Institute of Computing Technology, University of Chinese Academy of Sciences, Beijing, China\\
\email{\{jinxibo, zhangfa, songying, fanliya, zyliu\}@ict.ac.cn}
}

\maketitle
\begin{abstract}
While previous work on energy-efficient algorithms focused on assumption that tasks can be assigned to any processor, we initially study the problem of task scheduling on restricted parallel processors. The objective is to minimize the overall energy consumption while speed scaling (SS) method is used to reduce energy consumption under the execution time constraint (Makespan $C_{max}$). In this work, we discuss the speed setting in the continuous model that processors can run at arbitrary speed in $[s_{min},s_{max}]$. The energy-efficient scheduling problem, involving task assignment and speed scaling, is inherently complicated as it is proved to be NP-Complete. We formulate the problem as an Integer Programming (IP) problem. Specifically, we devise a polynomial time optimal scheduling algorithm for the case tasks have a uniform size. Our algorithm runs in $O(mn^3logn)$ time, where $m$ is the number of processors and $n$ is the number of tasks. We then present a polynomial time algorithm that achieves an approximation factor of $2^{\alpha-1}(2-\frac{1}{m^{\alpha}})$ ($\alpha$ is the power parameter) when the tasks have arbitrary size work. Experimental results demonstrate that our algorithm could provide an efficient scheduling for the problem of task scheduling on restricted parallel processors.
\end{abstract}
\section{Introduction}
Energy consumption has become an important issue in the parallel processor computational systems. Dynamic Speed Scaling (SS) is a popular approach for energy-efficient scheduling to significantly reduce energy consumption by dynamically changing the speeds of the processors. The well-known relationship between speed and power is the cube-root rule, more precisely, that is the power of a processor is proportional to $s^3$ when it runs at speed $s$ [1, 2]. Most research literatures [3, 4, 5, 6, 7, 8, 9, 10] have assumed a more general power function $s^{\alpha}$, where $\alpha>1$ is a constant power parameter. Note that it is a convex function of the processor's speed. Obviously, energy consumption is the power integrated over duration time. Higher speeds allow for faster execution, at the same time, result in higher energy consumption.

In the past few years, energy-efficient scheduling has received much attention from single processor to parallel processors environment. In the algorithmic community, the approaches can (in general) be categorized into the following two classes for reducing energy usage [5, 7]. (1) \textit{Dynamic speed scaling}: The processors lower down the speed to execute tasks as much as possible while fulfil their timing constraints. The reason behind energy saving via this strategy is the convexity of the power function. The gold is to decide the processing speeds in a way that minimizes the total energy consumption and guarantees the prescribed deadline. (2) \textit{Power-down management}: The processors will be put into the power-saving state when they are idle. But it is energy-cost for transiting back to the active state. This strategy is to determine whether there exist idle periods that can outweigh the transition cost and decide when to wake the power-saving mode in order to complete all tasks in time. Our paper focuses on energy-efficient scheduling via dynamic speed scaling strategy. In this policy, the goals of scheduling are either to minimize the total energy consumption or to trade off the conflicting objectives of energy and performance. The main difference is that the former one reduces the total energy consumption as long as the timing constraint is not violated, while the later one seeks the best point between the energy cost and performance metric (such as makespan and flow time).

Speed scaling has been widely studied to save energy consumption initiated by Yao et al. [3]. The previous work consider that a task can be assigned to any processor. But it is natural to consider the restricted scheduling in modern computational systems. The reason is that the systems evolve over time, such as cluster, then the processors of the system differ from each other in their functionality (For instance, the processors have different additional components). This leads to the task can only be assigned to the processors, which has the task's required component. I.e., it leads to different affinities between tasks and processors. In practice, certain tasks may have to be allocated for certain physical resources (such as GPU) [11]. It is also pointed out that some processors whose design is specialized for particular types of tasks, then tasks should be assigned to a processor best suited for them [12]. Furthermore, when considering tasks and input data, tasks need to be assigned on the processors containing their input data. In other words, a part of tasks can be assigned on processors set $A_i$, and a part of tasks can be assigned on processors set $A_j$, but $A_i{\neq}A_j, A_i{\cap}A_j{\neq}{\emptyset}$. Another case in point is the scheduling with processing processor restrictions aimed at minimizing the makespan has been studied extensively in the algorithmic community (See [13] for an excellent survey). Therefore, it is significant to study the scheduling with processor restrictions from both of practical and algorithmic requirements.

\textbf{Previous Work: }Yao et al. [3] were the first to explore the problem of scheduling a set of tasks with the smallest amount of energy on single processor environment via speed scaling. They proposed an optimal offline greedy algorithm and two bounded online algorithms named \textit{Optimal Available} and \textit{Average Rate}. Ishihara et al. [4] formulated the minimization-energy of dynamical voltage scheduling (DVS) as an integer linear programming problem when all tasks were ready at the beginning and shared common finishing time. They showed that in the optimal solution a processor only runs at two adjacent discrete speeds when it can use only a small number of discrete processor speeds.

Besides studying variant of the speed scaling problems on single processor, researchers also carried out studies on parallel processors environment. Chen et al. [6] considered energy-efficient scheduling with and without task migration over multiprocessor. They proposed approximation algorithm for different settings of power characteristics where no task was allowed to migrate. When task migration is allowed and migration cost is assumed being negligible, they showed that there is an optimal real-time task scheduling algorithm. Albers et al. [7] investigated the basic problem of scheduling a set of tasks on multi-processor settings with an aim to minimize the total energy consumption. First they studied the case that all tasks were unit size and proposed a polynomial time algorithm for agreeable deadlines. They proved it is NP-Hard for arbitrary release time and deadlines and gave a $\alpha^{\alpha}2^{4\alpha}$-approximation algorithm. For scheduling tasks with arbitrary processing size, they developed constant factor approximation algorithms. Aupy et al. [2] studied the minimization of energy on a set of processors for which the tasks assignment had been given. They investigated different speed scaling models. Angel et al. [10] consider the multiprocessor migratory and preemptive scheduling problem with the objective of minimizing the energy consumption. They proposed an optimal algorithm in the case where the jobs have release dates, deadlines and the power parameter ${\alpha}>2$.

There were also some literatures to research the performance under an energy bounded. Pruhs et al. [8] discussed the problem of speed scaling to optimize makespan under an energy budget in a multiprocessor environment where the tasks had precedence constraints ($Pm |prec,energy| C_{max}$, $m$ is the number of processors). They reduced the problem to the $Qm |prec| C_{max}$ and obtained a poly-log$(m)$-approximation algorithm assuming processors can change speed continuously over time. The research by Greiner et al. [9] was a present to study the trade off between energy and delay, i.e., their objective was to minimize the sum of energy cost and delay cost. They suggested a randomized algorithm $\mathcal{RA}$ for multiple processors: each task was assigned uniformly at random to the processors, and then the single processor algorithm $\mathcal{A}$ was applied separately to each processor. They proved that the approximation factor of $\mathcal{RA}$ was ${\beta}B_{\alpha}$ without task migration when $\mathcal{A}$ was a $\beta$-approximation algorithm ($B_{\alpha}$ is the $\alpha$-th Bell number). They also showed that any $\beta$-competitive online algorithm for a single processor yields a randomized ${\beta}B_{\alpha}$-competitive online algorithm for multiple processors without migration. Using the method of conditional expectations, the results could be transformed to a derandomized version with additional running time. Angel et al. [10] also extended their algorithm, which considered minimizing the energy consumption, to obtain an optimal algorithm for the problem of maximum lateness minimization under a budget of energy.

However, all of these results were established without taking into account the restricted parallel processors. More formally, let the set of tasks $\mathcal{J}$ and the set of processors $\mathcal{P}$ construct a bipartite graph $G=(\mathcal{J}+\mathcal{P},E)$, where the edge of $E$ denotes a task can be assigned to a processor. The previous work study $G$ is a \textit{complete} bipartite graph, i.e., for any two vertices, $v_1{\in}\mathcal{J}$ and $v_2{\in}\mathcal{P}$, the edge $v_1v_2$ is in $G$. We study the energy-efficient scheduling that $G$ is a \textit{general} bipartite graph, i.e., $v_1v_2$ may be not an edge of $G$.

\textbf{Our contribution: }In this paper, we address the problem of task \textbf{S}cheduling with the objective of \textbf{E}nergy \textbf{M}inimization on \textbf{R}estricted \textbf{P}arallel \textbf{P}rocessors (SEMRPP). It assumes all tasks are ready at time 0 and share a common deadline (a real-time constraint) [2, 4, 6, 7]. In this work, We discuss the continuous speed settings that processors can run at arbitrary speed in $[s_{min},s_{max}]$. We propose an optimal scheduling algorithm when all the tasks have uniform computational work. For the general case that the tasks have non-uniform computational work we prove that the minimization of energy is NP-Complete in the strong sense. We give a $2^{\alpha-1}(2-\frac{1}{m^{\alpha}})$-approximation algorithm, where $\alpha$ is the power parameter and $m$ is the number of processors. The performance of the approximation algorithm is evaluated through a set of experiments after algorithm analysis, and it turns out effective results to confirm the proposed scheduling work efficiently. To the best of our knowledge, our work may be the initial attempt to study energy optimization on the restricted parallel processors.

The remainder of this paper is organized as follows. We provide the formal description of model in Sections 2. Section 3 discusses some preliminary results and formulate the problem as an Integer Programming (IP) problem. In Section 4, we devise a polynomial time optimal scheduling algorithm in the case where the tasks have uniform size. In Section 5, we present a bounded factor approximation guarantee algorithm for the general case that the tasks have arbitrary size work. Section 6 shows the experimental results. Finally we conclude the paper in Sections 7.
\section{Problem and Model}
We model the SEMRPP problem of scheduling a set $\mathcal{J}=\{J_1,J_2,...,J_n\}$ of $n$ independent tasks on a set $\mathcal{P}=\{P_1,P_2,...,P_m\}$ of $m$ processors. Each task $J_j$ has an amount of computational work $w_j$ which is defined as the number of the required CPU cycles for the execution of $J_j$ [3]. We refer to the set $\mathcal{M}_j\subseteq{\mathcal{P}}$ as eligibility processing set of the task $J_j$, that is, $J_j$ needs to be scheduled on one of its eligible processors $\mathcal{M}_j(\mathcal{M}_j\neq{\phi})$. We also say that $J_j$ is allowable on processor $P_i\in{\mathcal{M}_j}$, and is not allowed to migrate after it is assigned on a processor. A processor can process at most one task at a time and all processors are available at time $0$.

At any time $t$, the speed of $J_j$ is denoted as $s_{jt}$, and the corresponding processing power is $P_{jt}=(s_{jt})^\alpha$. The amount of CPU cycles $w_j$ executed in a time interval is the speed integrated over duration time and energy consumption $E_j$ is the power integrated over duration time, that is, $w_j=\int{s_{jt}dt}$ and $E_j=\int{P_{jt}dt}$, following the classical models of the literature [2, 3, 4, 5, 6, 7, 8, 9, 10]. Note that in this work we focus on speed scaling and all processors are alive during the whole execution, so we do not take static energy into account [2, 8]. Let $c_j$ be the time when the task $J_j$ finishes its execution. Let $x_{ij}$ be an $0-1$ variable which is equal to one if the task $J_j$ is processed on processor $P_i$ and zero otherwise. We note that $x_{ij}=0$ if $P_i\notin{\mathcal{M}_j}$. Our goal is scheduling the tasks on processors to minimize the overall energy consumption when each task could finish before the given common deadline $C$ and be processed on its eligible processors. Then the SEMRPP problem is formulated as follows:
\[\mathbf{(P_0)}\qquad min{\sum_{j=1}^n\int{P_{jt}dt}}\]
\[s.t. \qquad c_j\leq{C}  \quad \forall{J_j},\]
\[\sum_{i=1}^mx_{ij}=1  \quad \forall{J_j},\]
\[x_{ij}{\in}\{0,1\}  \quad \forall{J_j},P_i\in{\mathcal{M}_j},\]
\[x_{ij}=0  \quad \forall{J_j},P_i\notin{\mathcal{M}_j}.\]
\section{Preliminary Lemma}
We start by giving preliminary lemmas for reformulating the SEMRPP problem.
\begin{lemma}
If $S$ is an optimal schedule for the SEMRPP problem in the continuous model, it is optimal to execute each task at a unique speed throughout its execution.
\end{lemma}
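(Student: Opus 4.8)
The plan is to argue task by task, using only the convexity of the power function $f(s)=s^{\alpha}$ (recall $\alpha>1$) together with Jensen's inequality. Fix an optimal schedule $S$ and a task $J_j$, assigned by $S$ to some processor $P_i\in\mathcal{M}_j$. Let $I_j\subseteq[0,C]$ be the (measurable) set of instants at which $P_i$ actually runs $J_j$ in $S$, and let $L_j=|I_j|>0$ be its total length. By the model, the work done on $J_j$ and the energy it consumes are $w_j=\int_{I_j}s_{jt}\,dt$ and $E_j=\int_{I_j}(s_{jt})^{\alpha}\,dt$.

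First I would apply Jensen's inequality to the convex function $f(s)=s^{\alpha}$ with the uniform probability measure $dt/L_j$ on $I_j$:
\[
\frac{1}{L_j}\int_{I_j}(s_{jt})^{\alpha}\,dt \;\ge\; \Big(\frac{1}{L_j}\int_{I_j}s_{jt}\,dt\Big)^{\alpha}=\Big(\frac{w_j}{L_j}\Big)^{\alpha},
\]
so that $E_j\ge L_j\,(w_j/L_j)^{\alpha}$, with equality if and only if $s_{jt}$ equals the constant $\bar s_j:=w_j/L_j$ for almost every $t\in I_j$. Next I would exhibit a modified schedule $S'$ that agrees with $S$ everywhere except that on $I_j$ task $J_j$ is run at the constant speed $\bar s_j$. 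Since $\bar s_j$ is an average of values lying in $[s_{min},s_{max}]$, it again lies in $[s_{min},s_{max}]$, so $S'$ respects the speed bounds; the running set $I_j$ is unchanged, so no processor conflict is introduced, the eligibility constraints are untouched, and $J_j$ still finishes exactly at $c_j\le C$; and the work performed on $J_j$ is still $w_j$. Hence $S'$ is feasible and $E(S')=E(S)-E_j+L_j(w_j/L_j)^{\alpha}\le E(S)$.

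Finally, optimality of $S$ forces $E(S')=E(S)$, i.e.\ $E_j=L_j(w_j/L_j)^{\alpha}$; by the \emph{strict} convexity of $s^{\alpha}$ for $\alpha>1$ this equality can only hold if $s_{jt}=\bar s_j$ for a.e.\ $t\in I_j$. As $J_j$ was arbitrary, every task in an optimal schedule is executed at a single speed throughout its execution, which is exactly the claim and lets us replace the functions $s_{jt}$ by one speed variable per task in the reformulation.

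There is no real obstacle here: the argument is a standard exchange/convexity argument. The only points requiring a little care are the measure-theoretic bookkeeping when $I_j$ is a union of several sub-intervals (should the model permit preempting $J_j$ on $P_i$), and the observation that averaging keeps the speed within the admissible window $[s_{min},s_{max}]$ — both are routine.
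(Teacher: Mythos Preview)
Your proof is correct and follows essentially the same approach as the paper's: both replace a task's varying speed profile by its time-average and invoke the convexity of $s^{\alpha}$ (Jensen's inequality) to show the energy does not increase. Your version is a bit more careful --- you use the integral form of Jensen, explicitly check that the averaged speed remains in $[s_{min},s_{max}]$, and appeal to strict convexity to handle the equality case --- whereas the paper argues with finitely many speed levels and directly asserts the contradiction.
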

\begin{proof}
Suppose $S$ is an optimal schedule that some task $J_j$ does not run at a unique speed during its execution. We denote $J_j$'s speeds by $s_{j1},s_{j2},...,s_{jk}$, the power of each speed $i$ is $(s_{ji})^\alpha,i=(1,2,...,k)$, and the execution time of the speeds are $t_{j1},t_{j2},...,t_{jk}$, respectively. So, its energy consumption is $\sum_{i=1}^kt_{ji}(s_{ji})^\alpha$. We average the $k$ speeds and keep the total execution time unchanged, i.e., $\bar{s}_j=(\sum_{i=1}^ks_{ji}t_{ji})/(\sum_{i=1}^kt_{ji})$. Because the power function is a convex function of speed, according to convexity [14] (In the rest of paper, it will use convexity in many place but will not add reference [14]), we have
\begin{equation*}
\begin{split}
\sum_{i=1}^kt_{ji}(s_{ji})^\alpha
&{=}(\sum_{i=1}^kt_{ji})(\sum_{i=1}^k\frac{t_{ji}}{\sum_{i=1}^kt_{ji}}(s_{ji})^{\alpha})\\
&{\geq}(\sum_{i=1}^kt_{ji})(\sum_{i=1}^k\frac{t_{ji}s_{ji}}{\sum_{i=1}^kt_{ji}})^{\alpha}
=(\sum_{i=1}^kt_{ji})(\bar{s}_j)^{\alpha}\\
&{=}\sum_{i=1}^kt_{ji}{(\bar{s}_j)^{\alpha}}
\end{split}
\end{equation*}
So the energy consumption by unique speed is less than a task run at different speeds. I.e. , if we do not change $J_j$'s execution time and its assignment processor (satisfying restriction), we can get a less energy consumption scheduling, which is a contradiction to that $S$ is an optimal schedule.
\end{proof}
\begin{corollary}
There exists an optimal solution for SEMRPP in the continuous model, for which each processor executes all tasks at a uniform speed, and finishes its tasks at time $C$.
\end{corollary}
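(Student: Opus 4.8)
The plan is to begin with an optimal schedule of the form supplied by Lemma 1 --- one in which every task is executed at a single speed throughout --- and then rewrite it processor by processor, never increasing the total energy, until it has the claimed structure. Crucially, the rewriting keeps each task on the same processor, so the assignment constraints $\sum_i x_{ij}=1$ and the eligibility constraints $x_{ij}=0$ for $P_i\notin\mathcal{M}_j$ are preserved throughout; only the speeds, and hence the completion times, change.

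Fix a processor $P_i$, let $J_{j_1},\dots,J_{j_k}$ be the tasks it runs in the starting schedule, with works $w_{j_l}$, speeds $s_{j_l}$ and execution times $t_{j_l}=w_{j_l}/s_{j_l}$, and set $W_i=\sum_l w_{j_l}$ and $T_i=\sum_l t_{j_l}\le C$. Since the tasks are independent and all released at time $0$, I may order them arbitrarily and pack them contiguously starting at time $0$. The first step replaces all the $s_{j_l}$ by the common time-averaged speed $\bar s_i=W_i/T_i$: applying the same convexity (Jensen) inequality as in the proof of Lemma 1, with weights $t_{j_l}/T_i$, gives $\sum_l t_{j_l}s_{j_l}^{\alpha}\ge T_i(W_i/T_i)^{\alpha}$, so this does not raise the energy on $P_i$ and leaves its makespan at $T_i\le C$. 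The second step stretches this uniform run to occupy all of $[0,C]$, lowering the speed to $W_i/C$; because $s\mapsto s^{\alpha}$ is increasing and $C\ge T_i$, the new energy $C(W_i/C)^{\alpha}=W_i^{\alpha}C^{1-\alpha}$ is at most $T_i(W_i/T_i)^{\alpha}=W_i^{\alpha}T_i^{1-\alpha}$. Now $P_i$ runs every assigned task at the single speed $W_i/C$ and finishes precisely at $C$.

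Performing this on every processor produces a feasible schedule whose total energy does not exceed the original optimum --- hence is itself optimal --- and in which each processor runs at a uniform speed and finishes at time $C$, which is the corollary. The only delicate point, and the nearest thing to an obstacle, is the speed window $[s_{min},s_{max}]$: the averaged speed $\bar s_i$ is automatically feasible, being a $t_{j_l}$-weighted mean of feasible speeds, but the stretched speed $W_i/C=\bar s_i\,T_i/C$ may dip below $s_{min}$ when $T_i<C$. When $W_i/C\ge s_{min}$ the construction goes through verbatim; otherwise one stretches only to the slowest feasible duration $W_i/s_{min}$, which still does not increase the energy (same monotonicity, since $s_{min}\le\bar s_i$) but leaves the processor idle before $C$ --- so the literal phrase ``finishes its tasks at time $C$'' is to be read under the mild hypothesis $W_i/C\in[s_{min},s_{max}]$, which is in any case consistent with the paper's neglect of idle/static energy.
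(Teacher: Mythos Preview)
Your argument is correct and follows essentially the same two-step route as the paper's own (very terse) justification: apply the Jensen/convexity argument of Lemma~1 processor-wise to equalize speeds, then slow each processor down to stretch its finishing time to $C$. Your treatment is simply more explicit, and your closing caveat about $W_i/C$ possibly falling below $s_{min}$ is a nice observation---the paper handles this by a standing assumption (stated just after the reformulation $\mathbf{(P_1)}$) that loads are large enough to keep each processor at or above $s_{min}$.
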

All tasks on a processor run at a unique speed can be proved like \emph{Lemma 1}. If some processor finishes its tasks earlier than $C$, it can lower its speed to consume less energy without breaking the time constraint and the restriction. Furthermore there will be no gaps in the schedule [8].

Above discussion leads to a reformulation of the SEMRPP problem in the continuous model as following:
\[\mathbf{(P_1)}\qquad min{\sum_{i=1}^m\frac{(\sum\limits_{j=1}^nx_{ij}w_j)^\alpha}{C^{\alpha-1}}}\]
\begin{eqnarray}
s.t. \qquad \sum_{j=1}^nx_{ij}w_j\leq{s_{max}C}  \quad \forall{P_i},\\
\sum_{i=1}^mx_{ij}=1  \quad \forall{J_j},\\
x_{ij}{\in}\{0,1\}  \quad \forall{J_j},P_i\in{\mathcal{M}_j},\\
x_{ij}=0  \quad \forall{J_j},P_i\notin{\mathcal{M}_j}.
\end{eqnarray}
The objective function is from that a processor $P_i$ runs at speed $\frac{\Sigma_{J_j on P_i}w_j }{C}=\frac{\Sigma_{j=1}^nx_{ij}w_j}{C}$, that is each task on $P_i$ will run at this speed, and $P_i$ will complete all the tasks on it at time $C$ (It assumes that, in each problem instance, the computational cycles of the tasks on one processor is enough to hold the processor will not run at speed $s_i<s_{min}$. Otherwise we are like to turn off some processors). Constraint $(1)$ follows since a processor can not run at a speed higher than $s_{max}$. Constraint (2) relates to that if a task has assigned on a processor it will not be assigned on other processors, i.e, non-migratory. Constraint (3) and (4) are the restrictions of the task on processors.
\begin{lemma}
Finding an optimal schedule for SEMRPP problem in the continuous model is NP-Complete in the strong sense.
\end{lemma}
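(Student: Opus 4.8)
The plan is to establish strong NP-completeness by reduction from a known strongly NP-complete problem. Membership in NP is immediate: given a guessed assignment $x_{ij}$, one checks in polynomial time that each task is placed on an eligible processor, that exactly one processor is chosen per task, that constraint $(1)$ holds, and that the objective value $\sum_i (\sum_j x_{ij} w_j)^\alpha / C^{\alpha-1}$ is at most the target bound. So the work is in the hardness direction, and because we want strong NP-completeness we must reduce from a problem that is NP-complete in the strong sense and whose numbers stay polynomially bounded in our instance.

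The natural candidate is \textbf{3-Partition}: given $3q$ positive integers $a_1,\dots,a_{3q}$ with $\sum_k a_k = qB$ and $B/4 < a_k < B/2$, decide whether they can be partitioned into $q$ triples each summing to $B$. I would build an SEMRPP instance with $m=q$ processors, $n=3q$ tasks with $w_k=a_k$, no eligibility restrictions (every $\mathcal{M}_j=\mathcal{P}$), a common deadline $C$, and $s_{max}$ large enough that constraint $(1)$ is never binding (e.g. $s_{max}C \ge qB$). The load on processor $P_i$ is $L_i=\sum_{j}x_{ij}w_j$, and by the reformulation $\mathbf{(P_1)}$ the energy is $\frac{1}{C^{\alpha-1}}\sum_{i=1}^m L_i^\alpha$ subject to $\sum_i L_i = qB$. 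Since $t\mapsto t^\alpha$ is strictly convex for $\alpha>1$, Jensen's inequality gives $\sum_i L_i^\alpha \ge m\,(qB/m)^\alpha = q\cdot B^\alpha$, with equality \emph{iff} every $L_i$ equals $B$; combined with the $B/4<a_k<B/2$ bounds, $L_i=B$ forces exactly three tasks per processor. Hence the SEMRPP instance admits a schedule of energy at most $qB^\alpha/C^{\alpha-1}$ if and only if the 3-Partition instance is a yes-instance. All numbers ($w_k=a_k$, $C$, $s_{max}C$, the energy bound $qB^\alpha/C^{\alpha-1}$ — choose $C$ so this is an integer, e.g. $C=1$ or absorb it) are bounded by a polynomial in the input magnitudes, so the reduction is pseudo-polynomial-preserving and establishes strong NP-completeness.

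The main technical point — not really an obstacle, but the step that needs care — is the equality analysis of the convexity bound: I must argue that \emph{any} schedule strictly better than the threshold would require some $L_i < B$ and hence some other $L_{i'} > B$, and then use the constraint $B/4 < a_k < B/2$ to rule out feasible integer loads other than exactly $B$ per machine (a load of $B$ cannot be made by two tasks since $2a_k < B$, nor by four since $4a_k > B$, so it is exactly three; and loads deviating from $B$ propagate a contradiction through $\sum_i L_i = qB$). A secondary point is to confirm that the reformulation $\mathbf{(P_1)}$ is legitimately the object we are deciding — this is exactly Corollary 1 together with Lemma 1, already proved above, so the decision version of $\mathbf{(P_0)}$ and of $\mathbf{(P_1)}$ coincide. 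One should also note that $\alpha$ is a fixed constant, so $B^\alpha$ is polynomially bounded in $B$; if one prefers to avoid even this, the same reduction works with the rescaled objective $\sum_i L_i^\alpha$ compared against the integer $qB^\alpha$. I would also remark that the reduction uses no eligibility restrictions at all, so hardness already holds for the unrestricted common-deadline problem and a fortiori for SEMRPP — consistent with the NP-hardness results cited from Albers et al.\ and Chen et al.
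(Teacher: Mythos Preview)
Your proposal is correct and follows essentially the same route as the paper: NP membership by direct verification, then a reduction from \textsc{3-Partition} with $m$ processors, $3m$ tasks of sizes $a_j$, trivial eligibility sets $\mathcal{M}_j=\mathcal{P}$, and the energy threshold $mB^\alpha/C^{\alpha-1}$, using strict convexity of $t\mapsto t^\alpha$ (Jensen) to show the threshold is met iff every processor load equals $B$. Your write-up is in fact a bit more careful than the paper's (you spell out why constraint~(1) is slack, why the numbers stay polynomially bounded, and that the unrestricted special case already suffices); the one minor wobble is the sentence about ``any schedule strictly better than the threshold''---by Jensen no such schedule exists, so the argument is simply that equality holds iff all $L_i=B$, which is exactly what you need.
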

\begin{proof}
First, we transform the optimization problem to an associated decision problem: given time and processors eligibility restrictions, and a bound on the energy consumption, is there a schedule such that the restrictions and the bound on energy consumption are satisfied. Clearly, it is in NP, since we can verify in polynomial time that a proposed schedule satisfies the given restrictions and the bound on energy consumption. We will prove that finding an optimal schedule for SEMRPP problem is NP-Complete in the strong sense via the reduction to the 3-PARTITION problem.

Consider an instance of the SEMRPP problem that $\mathcal{M}_j=\mathcal{P}$ for all tasks $J_j$ and $s_{max}$ is fast enough to assure a feasible schedule for the given tasks. By the convexity of the function $f(s)=s^\alpha(\alpha>1)$, we note that the optimal schedule is to averagely partition the tasks to processors. Then we can finish the proof by a pseudo-polynomial reduction from the 3-PARTITION problem.

Consider an instance of 3-Partition: Given a list $A=(a_1,a_2,...,a_{3m})$ of $3m$ positive integers such that ${\sum}a_j=mB, \frac{1}{4}<a_j<\frac{1}{2}$ for each $1{\leq}j{\leq}3m$, is there a partition of $A$ into $A_1,A_2,...,A_m$ such that $\sum_{a_j{\in}A_i}a_j=B$ for each $1{\leq}i{\leq}m$? [15, 16] We construct an instance of SEMRPP problem as follows. There are $3m$ tasks for whose execution cycles are equal to $a_j$ and there are $m$ processors. The deadline $C=1$ and the energy consumption is $mB^{\alpha}$. Denote the execution cycles of processors as $(h_1,h_2,...,h_m)$. According to $\mathbf{(P_1)}$, the energy consumption is ${\sum}_{i=1}^m(h_i)^{\alpha}$. By convexity, we have ${\sum}_{i=1}^m(h_i)^{\alpha}=m{\sum}_{i=1}^m\frac{1}{m}(h_i)^{\alpha}{\geq}m(\frac{1}{m}{\sum}_{i=1}^mh_i)^{\alpha}=mB^{\alpha}$ (Note that ${\sum}_{i=1}^mh_i=mB$). The energy consumption is equal to $mB^{\alpha}$ if and only if $h_1=h_2=...=h_m=B$. Thus, there is an optimal schedule if and only if there is a 3-Partition. It is clear that the above reduction is a pseudo-polynomial reduction. So we can conclude that SEMRPP in the continuous model is strongly NP-Complete by this pseudo-polynomial time reduction to 3-PARTITION problem which has been proved NP-Complete in the strong sense.
\end{proof}
\begin{lemma}
There exists a polynomial time approximation scheme (PTAS) for the SEMRPP problem in the continuous model, when $\mathcal{M}_j=\mathcal{P}$ and $s_{max}$ is fast enough.
\end{lemma}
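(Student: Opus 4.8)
The plan is to recognize that, once $\mathcal{M}_j=\mathcal{P}$ and $s_{max}$ is large enough that the capacity constraint $(1)$ of $\mathbf{(P_1)}$ never binds, the problem reduces to a purely combinatorial load-balancing question, and then to invoke (or re-derive) a known approximation scheme for it. Writing $L_i=\sum_{j=1}^n x_{ij}w_j$ for the total work placed on $P_i$, the objective of $\mathbf{(P_1)}$ is $C^{1-\alpha}\sum_{i=1}^m L_i^{\alpha}$, where $C$ is a fixed constant and $\sum_{i=1}^m L_i=W:=\sum_{j=1}^n w_j$ is also fixed. Thus SEMRPP with $\mathcal{M}_j=\mathcal{P}$ is precisely the problem of partitioning $n$ positive weights among $m$ identical machines so as to minimize $\sum_i L_i^{\alpha}$, i.e.\ to minimize the $\ell_{\alpha}$-norm of the load vector. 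Since $f(x)=x^{\alpha}$ is convex, vanishes at $0$, and is \emph{smooth} in the sense that $f\big((1+\epsilon)x\big)=(1+\epsilon)^{\alpha}f(x)\le\big(1+g(\epsilon)\big)f(x)$ with $g(\epsilon)\to 0$ as $\epsilon\to 0$, this objective falls within the class for which Alon, Azar, Woeginger and Yadid established a PTAS on identical machines (see also the $\ell_p$-norm scheduling results of Azar--Epstein and of Epstein--Sgall); the quickest route is to verify their hypotheses and quote the theorem, noting in passing that the strong NP-completeness of Lemma~3 already rules out an FPTAS, so a PTAS is essentially best possible.

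For a self-contained argument I would follow the standard dual-approximation and configuration-enumeration blueprint. Fix the target accuracy $\epsilon>0$. First run a constant-factor heuristic (e.g.\ longest-processing-time-first) and combine it with the lower bound $\mathrm{OPT}\ge m(W/m)^{\alpha}=W^{\alpha}/m^{\alpha-1}$ (convexity) to restrict the search for the optimal maximum load $T=L_{\max}$ to a list of $O(\epsilon^{-1}\log m)$ candidate values, one of which satisfies $L_{\max}^{\mathrm{OPT}}\le T<(1+\epsilon)L_{\max}^{\mathrm{OPT}}$. Call a task \emph{big} if $w_j\ge\epsilon T$ and \emph{small} otherwise; any machine of a schedule with makespan at most $T$ holds at most $1/\epsilon$ big tasks, and rounding each big work up to the nearest power of $(1+\epsilon)$ leaves only $O(\epsilon^{-1}\log\epsilon^{-1})$ distinct big sizes, a number independent of $m$. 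Hence there are only $O_{\epsilon}(1)$ possible multisets of big tasks on a single machine, and one can enumerate, in time polynomial in $m$ for fixed $\epsilon$ (or by solving a feasibility integer program in fixed dimension), all ways of choosing how many machines receive each such multiset subject to using up exactly the available big tasks. For every resulting big-task assignment, complete the schedule by pouring the small tasks greedily onto a currently least-loaded machine (equivalently, by approximating the fractional water-filling completion), evaluate $\sum_i L_i^{\alpha}$, and return the cheapest schedule found over all $T$ and all assignments.

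The error analysis is where the real work lies, and I expect the small-task completion to be the main obstacle. One fixes an optimal schedule $S^{\ast}$; after the rounding above, its big tasks form one of the enumerated assignments (at a cost increased by at most a factor $(1+\epsilon)^{\alpha}$), so it suffices to bound the cost of the greedy small-task completion of that assignment against the cost of $S^{\ast}$'s own completion of it. For a fixed placement of the big tasks the optimal \emph{fractional} completion of the small-task mass is a water-filling of the loads and, by convexity, lower-bounds the cost of every integral completion; the delicate point is that, in contrast to the makespan objective, a task that is small relative to $T$ need not be small relative to the average load $W/m$, so pouring such a task onto a lightly loaded machine can in principle distort $\sum_i L_i^{\alpha}$ substantially. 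The remedy is to discretize the total small-task mass placed on each machine (in the style of Alon--Azar--Woeginger--Yadid) instead of relying on a single global threshold, so that the integral completion matches the fractional water-filling coordinatewise up to a controlled relative error, and then to turn that per-coordinate perturbation into a multiplicative $\big(1+O_{\alpha}(\epsilon)\big)$ loss on $\sum_i L_i^{\alpha}$ using the two lower bounds $\mathrm{OPT}\ge W^{\alpha}/m^{\alpha-1}$ and $\mathrm{OPT}\ge\sum_{j}w_j^{\alpha}$ (the latter because $\big(\sum_{j\in S_i}w_j\big)^{\alpha}\ge\sum_{j\in S_i}w_j^{\alpha}$ for $\alpha>1$). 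Combining these estimates and finally replacing $\epsilon$ by a suitable constant multiple of $\epsilon/\alpha$ yields a schedule of cost at most $(1+\epsilon)\,\mathrm{OPT}$; since the whole procedure runs in time polynomial in $n$ and $m$ for every fixed $\epsilon$, this establishes the PTAS.
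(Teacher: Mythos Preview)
Your proposal is correct and follows essentially the same approach as the paper: both observe that, once $\mathcal{M}_j=\mathcal{P}$ and the speed cap is non-binding, the objective of $\mathbf{(P_1)}$ becomes $C^{1-\alpha}\sum_i L_i^{\alpha}$, so SEMRPP reduces to minimizing the $\ell_{\alpha}$-norm of the load vector on identical machines, and then invoke the PTAS of Alon, Azar, Woeginger and Yadid~[17]. The paper's proof stops at this citation, whereas you go further and sketch a self-contained dual-approximation/configuration-enumeration construction of that PTAS; this extra material is not required for the lemma but is consistent with the cited result.
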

\begin{proof}
The proof is a little similar to [8] whose aim is giving a PTAS for the problem that measures the makespan under an energy bounded $(Sm |energy| C_{max})$. It turns out that the SEMRPP problem is equivalent to minimizing the $\textit{l}_{\alpha}$ norm \footnote{For a positive number $\alpha{\geq}1$, the $\textit{l}_{\alpha}$ norm of a vector $\textbf{x}=(x_1,x_2,...,x_n)$ is defined by $\|\textbf{x}\|=(|x_1|^{\alpha}+|x_2|^{\alpha}+...+|x_n|^{\alpha})^{\frac{1}{\alpha}}$} of the loads [17] from the description of \textit{Lemma 2} (see ${\sum}_{i=1}^m(h_i)^{\alpha}$ and $\alpha$ is a constant power parameter). Then we use the PTAS given in [17], that is, for any $\epsilon>0$, we can find the sum of the execution cycles of the tasks on processor $P_i$ (denoted as load below) $L_1,L_2,...,L_m$ in polynomial time such that $\Sigma_{i=1}^m(L_i)^\alpha{\leq}(1+\epsilon)\Sigma_{i=1}^m(OPT_i)^\alpha$, where $L_i$ is the load of scheduling and $OPT_i$ is the optimal load for processor $P_i$, respectively.
\end{proof}
Note that we give the detail proof of Lemma 2 and Lemma 3 that were similarly stated as observations in the work [7], and we mainly state the conditions when they are established in the restricted environment. (such as the set of restricted processors and the upper speed $s_{max}$ that we discuss below in the paper)
\section{Uniform tasks}
We now propose an optimal algorithm for a special case of SEMRPP problem for which all tasks have equal execution cycles (uniform) (denoted as ECSEMRPP\_Algo algorithm). Note that we can set $w_j=1,{\forall}J_j$ and set $C=C/w_j$ in $\mathbf{(P_1)}$ without loss of generality. Given the set of tasks $\mathcal{J}$, the set of processors $\mathcal{P}$ and the sets of eligible processors of tasks $\{\mathcal{M}_j\}$, we construct a network $G=(V,E)$ as follow: the vertex set of $G$ is $V=\mathcal{J}\cup\mathcal{P}\cup\{s,t\}$ ($s$ and $t$ correspond to a source and a destination, respectively), the edge set $E$ of $G$ consists of three subsets: (1)$(s,P_i)$ for all $P_i{\in}\mathcal{P}$; (2)$(P_i,J_j)$ for $P_i{\in}\mathcal{M}_j$; (3)$(J_j,t)$ for all $J_j{\in}\mathcal{J}$. We set unit capacity to edges $(P_i,J_j)$ and $(J_j,t)$, $(s,P_i)$ have capacity $c$ (initially we can set $c=n$). Define $L^*=min\{max\{L_i\}\} (i=1,2,...,m)$, $L_i$ is the load of processor $P_i$ and it can be achieved by \textbf{\emph{Algorithm 1}}.
\begin{algorithm}
\SetKwInOut{Input}{input}\SetKwInOut{Output}{output}
\Input{$(G,n)$}
\Output{$L^*,P_i$ that have the maximal load, the set $\mathcal{J}_i$ of tasks that load on $P_i$}
1: Let variable $l=1$ and variable $u=n$\;
2: If $l=u$, then the optimal value is reached: $L^*=l$, return the $P_i$ and $\mathcal{J}_i$, stop\;
3: Else let capacity $c={\lfloor}\frac{1}{2}(l+u){\rfloor}$. Find the Maximum-flow in the network $G$. If the value of Maximum-flow is exact $n$, namely $L^*{\leq}c$, then set $u=c$ and keep $P_i$, $\mathcal{J}_i$ by the means of the Maximum-flow. Otherwise, the value of Maximum-flow is less than $n$, namely $L^*>c$, we set $l=c+1$. Go back to $2$.
\caption{{BS\_Algo$(G,n)$}}
\end{algorithm}
\begin{lemma}
The algorithm BS\_Algo solves the problem of finding minimization of maximal load of processor for restricted parallel processors in $O(n^3logn)$ time, if all tasks have equal execution cycles.
\end{lemma}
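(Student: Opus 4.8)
The plan is to establish three things. First, that for every candidate capacity $c$ the network $G$ (with the edges $(s,P_i)$ given capacity $c$) carries an integral $s$--$t$ flow of value $n$ if and only if there is an assignment of all $n$ unit tasks to eligible processors in which no processor receives more than $c$ tasks. Second, that this predicate is monotone in $c$, so the binary search of \emph{Algorithm 1} is correct and terminates with $L^{*}$. Third, that the claimed running time follows by counting iterations and the cost of one max-flow computation. For the first point: all edges except the $(s,P_i)$ carry unit capacity and every capacity is an integer, so by integrality of max-flow we may take the computed maximum flow to be $\{0,1\}$-valued on the edges $(P_i,J_j)$ and $(J_j,t)$; decomposing a flow of value $n$ into $n$ arc-disjoint unit $s$--$t$ paths, the unit capacity on $(J_j,t)$ forces each task vertex onto exactly one path, entered through a single edge $(P_i,J_j)$ with $P_i\in\mathcal{M}_j$, so reading off ``$J_j$ on $P_i$'' gives an eligibility-respecting assignment whose load on $P_i$ equals the flow on $(s,P_i)$, hence at most $c$; conversely any feasible assignment with all loads $\le c$ yields such a flow by pushing one unit along $s\to P_i\to J_j\to t$ per task. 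Taking $c=n$ and using $\mathcal{M}_j\neq\emptyset$ shows the max-flow is $n$ there, so $1\le L^{*}\le n$ and $L^{*}$ is precisely the least $c$ for which the predicate holds, which is what the algorithm seeks.

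For the second point, enlarging $c$ only relaxes the capacities of the edges $(s,P_i)$, so a flow of value $n$ for some $c$ is still a feasible flow for every $c'\ge c$; hence ``$\mathrm{maxflow}(c)=n$'' is monotone nondecreasing in $c$. Since the task sizes are uniform the optimal maximum load is an integer, so it suffices to search over integer $c$. The update rule in step~3 (set $u=c$ when $\mathrm{maxflow}(c)=n$, else $l=c+1$) therefore preserves the invariant $l\le L^{*}\le u$, the interval $[l,u]$ strictly shrinks, and the loop halts with $l=u=L^{*}$; the maximally loaded processor and its task set $\mathcal{J}_i$ are recovered from the flow decomposition of the last successful max-flow exactly as above.

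For the running time: $[l,u]\subseteq[1,n]$ is (at least) halved each pass, giving $O(\log n)$ iterations. Each iteration computes one maximum flow on a network with $O(m+n)=O(n)$ vertices and $O(mn)=O(n^{2})$ edges (using $m=O(n)$; processors incident to no task may be dropped); every $t$-side cut has capacity at most $n$, so the maximum flow value is at most $n$ and an augmenting-path algorithm performs at most $n$ augmentations, each found by a graph search in $O(|E|)=O(n^{2})$ time, for $O(n^{3})$ per iteration and $O(n^{3}\log n)$ in total.

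I expect the part needing the most care to be the first step: making the correspondence between integral flows of value $n$ and eligibility-respecting assignments fully rigorous through flow decomposition and unit-capacity bookkeeping, and then correctly matching ``smallest feasible $c$'' with the definition $L^{*}=\min\{\max_i L_i\}$. The monotonicity argument and the time count are routine once that equivalence is in place, modulo the mild accounting assumption $m=O(n)$ used to bound the edge set.
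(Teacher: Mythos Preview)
Your proposal is correct and follows essentially the same approach as the paper: reduce feasibility of ``max load $\le c$'' to a max-flow computation on the bipartite network, binary-search over $c$, and charge $O(n^3)$ per max-flow times $O(\log n)$ iterations. The paper's own proof is in fact just a one-line appeal to the max-flow reduction in reference~[18] together with the same $O(n^3)\cdot O(\log n)$ complexity count, so your write-up simply supplies the details (flow--assignment correspondence, monotonicity, and invariant maintenance) that the paper leaves implicit.
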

Its proof can mainly follow from the Maximum-flow in [18]. The computational complexity is equal to the time $O(n^3)$ to find Maximum-flow multiple $logn$ steps, i.e, $O(n^3logn)$.

We construct our ECSEMRPP\_Algo algorithm (\textbf{\emph{Algorithm 2}}) through finding out the min-max load vector $\vec{l}$ that is a strongly-optimal assignment defined in [17, 19].
\begin{definition}
Given an assignment $H$ denote by $S_k$ the total load on the $k$ most load of processors. We say that an assignment is {strongly-optimal} if for any other assignment $H^{'}$ ($S_k^{'}$ accordingly responds to the total load on the $k$ most load of processors) and for all $1{\leq}k{\leq}m$ we have $S_k{\leq}S_k^{'}$.
\end{definition}
\begin{algorithm}
1: Let $G_0=G(V,E)$, $n_0=n$, $\mathcal{P}^H=\phi$, $\mathcal{J}^H=\{\phi_1,...,\phi_m\}$\;
2: Call $BS\_Algo(G_0,n_0)$\;
3: Set maximal load sequence index $i=i+1$. According to the scheduling returned by step $2$, we note the processor $P_i^H$ that have actual maximal load and note its task set $\mathcal{J}_i^H$. $\mathcal{E}_i^H$ corresponds to the related edges of $P_i^H$ and $\mathcal{J}_i^H$. We set $G_0=\{V{\setminus}{P_i^H}{\setminus}{\mathcal{J}_i^H},E{\setminus}{\mathcal{E}_i^H}\}$, $\mathcal{P}^H=\mathcal{P}^H{\cup}\{P_i^{H}\}$, ${\phi}_i=\mathcal{J}_i^{H}$. We set $n_0=n_0-|{\mathcal{J}_i^H}|$. If $G_0{\neq}{\phi}$, go to step 2\;
4: We assign the tasks of $\mathcal{J}_i^H$ to $P_i^H$ and set all tasks at speed $\frac{\Sigma_{J_j{\in}\mathcal{J}_i^H}w_j}{C}$ on $P_i^H$. Return the final schedule $H$.
\caption{ECSEMRPP\_Algo}
\end{algorithm}
\begin{theorem}
Algorithm ECSEMRPP\_Algo finds the optimal schedule for the SEMRPP problem in the continuous model in $O(mn^3logn)$ time, if all tasks have equal execution cycles.
\end{theorem}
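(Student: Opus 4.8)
\emph{Overview.} The plan is to split the claim into correctness and running time, and to reduce correctness to the single assertion that \emph{ECSEMRPP\_Algo} returns a \emph{strongly-optimal} assignment in the sense of Definition 1. By Corollary 1 there is an optimal schedule in which every processor runs at one speed and finishes exactly at time $C$; writing $L_i=\sum_j x_{ij}w_j$ for the load of $P_i$ (we may take $w_j=1$), the optimal energy equals $\min\sum_{i=1}^{m}L_i^{\alpha}/C^{\alpha-1}$ over all assignments obeying the eligibility constraints $(3)$--$(4)$ and the bound $(1)$. Since $t\mapsto t^{\alpha}$ is convex, the majorization / Schur-convexity fact behind the notion of strong optimality in [17,19] shows that a strongly-optimal assignment --- one with $S_k\le S_k'$ for every $k$ and every feasible $H'$ --- simultaneously minimizes $\sum_i\phi(L_i)$ for every convex $\phi$, and in particular $\sum_i L_i^{\alpha}$. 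Hence it suffices to show Algorithm 2 outputs such an assignment; its Step~4 then assigns the energy-optimal (uniform) speeds to that load vector by Lemma 1.

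\emph{Correctness of the peeling loop.} I would induct on the number of processors in the current residual network $G_0$. By Lemma 4, BS\_Algo$(G_0,n_0)$ returns $L^{*}=\min\{\max_i L_i\}$ over feasible assignments of $G_0$ together with a witnessing assignment, a processor $P_i^{H}$ of load exactly $L^{*}$, and its task set $\mathcal{J}_i^{H}$; the algorithm freezes these and recurses on $G_0\setminus(P_i^{H}\cup\mathcal{J}_i^{H})$. The base case of one processor is immediate. For the inductive step I must show that prepending $P_i^{H}$ with load $L^{*}$ to a strongly-optimal assignment of the residual instance yields a strongly-optimal assignment of $G_0$. First, $S_1=L^{*}$ is forced: no feasible assignment has a smaller maximum load and some feasible assignment attains it, so any strongly-optimal assignment of $G_0$ has top load $L^{*}$.

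\emph{The main obstacle.} The remaining and hardest point is to show that freezing \emph{this particular} processor and \emph{these particular} $L^{*}$ tasks does not conflict with the later recursive choices --- i.e., that the residual load vector the algorithm produces is majorization-minimal among \emph{all} residual instances arising from \emph{all} feasible assignments of $G_0$. I would establish this by an exchange / augmenting-path argument: any feasible assignment of $G_0$ can be transformed, without increasing any partial sum $S_k$, into one whose heaviest processor is $P_i^{H}$ carrying precisely $\mathcal{J}_i^{H}$. This is exactly where the uniform-size hypothesis is used --- with equal-size tasks the residual network is again a unit-capacity bipartite flow network, so each reroute along an augmenting path moves one whole task and changes two loads by $\pm1$, which lets one reshape an arbitrary assignment into the algorithm's configuration one augmenting path at a time while keeping the sorted partial sums under control. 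I would carry this out with a careful majorization / potential argument, or invoke the corresponding structural lemma of [17,19] verbatim.

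\emph{Running time.} Each call to BS\_Algo costs $O(n^{3}\log n)$ by Lemma 4: an $O(\log n)$-step binary search over the capacity $c$, with one max-flow computation per step. The outer loop of Algorithm 2 deletes exactly one processor per iteration, and while $G_0\neq\emptyset$ some task is still unassigned, so the heaviest processor is nonempty and the loop halts after at most $m$ iterations; the assignment and speed setting in Step~4 are linear. Thus the algorithm runs in $O(mn^{3}\log n)$ time. Combining the reduction, the inductive correctness of the peeling loop, and this bound --- and noting that feasibility with respect to $s_{min}$ and $s_{max}$ holds by the standing assumption stated after $\mathbf{(P_1)}$ --- proves the theorem.
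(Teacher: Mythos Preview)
Your proposal is correct and follows essentially the same high-level strategy as the paper: reduce correctness to showing that the returned assignment is strongly-optimal (Definition~1), then appeal to convexity/majorization to conclude that it minimizes $\sum_i L_i^{\alpha}$, and bound the running time by $m$ calls to BS\_Algo at $O(n^3\log n)$ each.

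The one meaningful difference is in how strong optimality is argued. You set up an explicit induction on the number of processors and isolate the ``main obstacle'' --- that freezing the particular heaviest processor and its task set does not hurt the residual instance --- proposing to resolve it by an augmenting-path exchange argument that exploits unit task sizes. The paper instead gives a short direct argument: comparing $H$ to any other feasible $H'$, it observes that since each $L_i$ was chosen as the min-max of the residual instance, $H'$ can differ from $H$ only by shifting tasks from a processor $P_i$ to some $P_j$ with $j<i$ (a heavier processor), else one could lower $L_i$, contradicting BS\_Algo; this immediately yields $\sum_{k\le i}L_k\le\sum_{k\le i}L_k'$. Your inductive/exchange framing is more explicit about the point the paper's argument compresses into one sentence, but both routes arrive at the same place.
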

\begin{proof}
First we prove the return assignment $H$ of ECSEMRPP\_Algo is a strongly-optimal assignment. We set $H=\{L_1,L_2,...,L_m\}$, $L_i$ corresponds to the load of processor $P_i$ in non-ascending order. Suppose $H^{'}$ is another assignment that $H^{'}{\neq}H$ and $\{L_1^{'},L_2^{'},...,L_m^{'}\}$ corresponds to the load. According to the ECSEMRPP\_Algo algorithm, we know that $H^{'}$ can only be the assignment that $P_i$ moves some tasks to $P_j(j<i)$, because $P_i$ can not move some tasks to $P_{j^{'}}(j^{'}{>}i)$ otherwise it can lower the $L_i$ which is a contradiction to ECSEMRPP\_Algo algorithm. We get $\Sigma_{k=1}^{i}L_i{\leq}\Sigma_{k=1}^{i}L_i^{'}$, i.e., $H$ is a strongly-optimal assignment by the definition. It turns out that there does not exist any assignment that can reduce the difference between the loads of the processors in the assignment $H$. I.e., there are not other assignment can reduce our aim as it is convexity. So the optimal scheduling is obtained.

Every time we discard a processor, so the total cost time is $m{\times}O(n^3logn)=O(mn^3logn)$ according to \textit{Lemma $4$}, which completes the proof.
\end{proof}
\section{General tasks}
As it is NP-Complete in the strong sense for general tasks (\textit{Lemma $2$}), we aim at getting an approximation algorithm for the SEMRPP problem. First we relax the equality $(3)$ of $\mathbf{(P_1)}$ to
\begin{equation}
0{\leq}x_{ij}{\leq}1 \qquad {\forall}J_j,P_i{\in}\mathcal{M}_j
\end{equation}

After relaxation, the SEMRPP problem transforms to a convex program. The feasibility of the convex program can be checked in polynomial time to within an additive error of ${\epsilon}$ (for an arbitrary constant ${\epsilon}>0$) [20], and it can be solved optimally [14]. Suppose $x^*$ be an optimal solution to the relaxed SEMRPP problem. Now our goal is to convert this fractional assignment to an integral one $\bar{x}$. We adopt the dependent rounding introduced by [16, 19, 21].

Define a bipartite graph $G(x^*)=(V,E)$ where the vertices of $G$ are $V=\mathcal{J}{\cup}\mathcal{P}$ and $e=(i,j){\in}E$ if $x_{ij}^*{>}0$. The weight on edge $(i,j)$ is $x_{ij}^{*}w_j$. The rounding iteratively modifies $x_{ij}^*$, such that at the end $x_{ij}^*$ becomes integral. There are mainly two steps as following:

\textit{\romannumeral1. Break cycle:}

1.While$(G(x^*)$ has cycle $C=(e_1,e_2,...,e_{2l-1},e_{2l}))$

2.Set $C_1=(e_1,e_3,...,e_{2l-1})$ and $C_2=(e_2,e_4,...,e_{2l})$.

\quad Find minimal weight edge of $C$, denoted as $e_{min}^C$ and its weight ${\epsilon}=min_{e{\in}C_1||e{\in}C_2}e$;

3.If $e_{min}^C{\in}C_1$ then every edge in $C_1$ subtract ${\epsilon}$ and every edge in $C_2$ add ${\epsilon}$;

4.Else every edge in $C_1$ add ${\epsilon}$ and every edge in $C_2$ subtract ${\epsilon}$;

5.Remove the edges with weight $0$ from $G$.

\textit{\romannumeral2. Rounding fractional tasks:}

1.In the first rounding phase consider each integral assignment if $x_{ij}^*=1$, set $\bar{x}_{ij}=1$ and discard the corresponding edge from the graph. Denote again by $G$ the resulting graph;

2.While$(G(x^*)$ has connected component $C)$

3.Choose one task node from $C$ as root to construct a tree $Tr$, match each task node with any one of its children. The resulting matching covers all task nodes;

4.Match each task to one of its children node (a processor) such that $P_i=argmin_{P_i{\in}\mathcal{P}}\Sigma_{\bar{x}_{ij}=1}\bar{x}_{ij}w_j$, set $\bar{x}_{ij}=1$, and $\bar{x}_{ij}=0$ for other children node respectively.
\begin{lemma}
Relaxation-Dependent rounding finds an $2^{\alpha}$-approximation to the optimal schedule for the SEMRPP problem in the continuous model in polynomial time.
\end{lemma}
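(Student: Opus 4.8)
\noindent\emph{Proof plan.} The plan is to follow the objective value through the two rounding phases and to bound the final integral cost against the optimum $OPT$ of $\mathbf{(P_1)}$. Let $x^*$ be the optimal solution of the relaxed convex program, with processor loads $L_i^*=\sum_{j=1}^n x^*_{ij}w_j$. Since relaxing the integrality constraint $(3)$ only enlarges the feasible set, $\sum_{i=1}^m (L_i^*)^\alpha/C^{\alpha-1}\le OPT$. I would then argue that the \emph{break cycle} phase leaves the objective unchanged and the \emph{rounding fractional tasks} phase increases it by a factor of at most $2^\alpha$, so that the returned schedule costs at most $2^\alpha\,OPT$.

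For \emph{break cycle}: on a cycle $C$ every vertex is incident to exactly one edge of $C_1$ and one of $C_2$, so adding $\epsilon$ on one side and subtracting it on the other leaves the total incident weight at each vertex unchanged; at a processor vertex this preserves its load $L_i^*$, and at a task vertex it preserves $\sum_i x^*_{ij}=1$. Because $\epsilon$ is the minimum weight on $C$, all modified variables stay in $[0,1]$, so feasibility is maintained, and each iteration deletes at least one zero-weight edge, so after $O(|E|)$ iterations $G(x^*)$ is a forest with the same loads $L_i^*$.

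For \emph{rounding fractional tasks}: fixing the integral edges leaves $P_i$ with a committed load $A_i=\sum_{j:\,x^*_{ij}=1}w_j\le L_i^*$. Every remaining task node is fractional, hence has degree $\ge 2$ in the forest, so once a component is rooted at a task node every task node has at least one processor child; since distinct task nodes have disjoint children in a tree, matching each task to one of its children is a valid matching that covers all tasks and uses each processor at most once. Thus each $P_i$ receives at most one extra task $J_{j_0(i)}$ and ends with load $A_i+B_i$, where $B_i:=w_{j_0(i)}$ (and $B_i:=0$ if $P_i$ is unmatched). By convexity of $t\mapsto t^\alpha$, the rounded cost times $C^{\alpha-1}$, namely $\sum_i (A_i+B_i)^\alpha$, is at most $2^{\alpha-1}\big(\sum_i A_i^\alpha+\sum_i B_i^\alpha\big)$. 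Here $\sum_i A_i^\alpha\le\sum_i (L_i^*)^\alpha\le OPT\cdot C^{\alpha-1}$, and $\sum_i B_i^\alpha=\sum_{j\in F}w_j^\alpha$, where $F$ is the set of fractional tasks, each matched exactly once. The key point is that $t\mapsto t^\alpha$ is super-additive on $[0,\infty)$ for $\alpha\ge 1$, so in \emph{any} assignment, in particular the integral optimum, each processor's load raised to the power $\alpha$ is at least the sum of the $\alpha$-th powers of the sizes of the tasks placed on it; summing over processors gives $\sum_{j=1}^n w_j^\alpha\le\sum_{i=1}^m (L_i^{OPT})^\alpha=OPT\cdot C^{\alpha-1}$, hence $\sum_{j\in F}w_j^\alpha\le OPT\cdot C^{\alpha-1}$ as well. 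Combining, the rounded cost is at most $2^{\alpha-1}\cdot 2\cdot OPT=2^\alpha\,OPT$; polynomiality is immediate since the relaxed convex program is (approximately) solvable in polynomial time and both rounding phases are polynomial.

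I expect the main obstacle to be the combinatorial bookkeeping rather than the inequalities: one must verify that cycle breaking simultaneously preserves every processor load, preserves each task's normalization, and keeps all variables in $[0,1]$ while provably reaching a forest, and that the tree-matching step is well defined — every task has a child available and no processor is handed two tasks (otherwise the $A_i+B_i$ accounting breaks). One should also check that the rounded loads still respect the $s_{max}$ cap, which needs $s_{max}$ large enough for the instance. The only genuinely non-routine idea is bounding $\sum_{j\in F}w_j^\alpha$ against $OPT$ by super-additivity; trying instead to bound it by the fractional cost $\sum_i (L_i^*)^\alpha$ fails, because a task split across many processors contributes far less than $w_j^\alpha$ to the fractional cost.
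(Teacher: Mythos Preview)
Your argument is correct and is essentially the standard proof from [19]; the paper itself omits the proof of this lemma entirely, deferring to that reference, but the same ingredients appear in its proof of Theorem~2 (the $2^{\alpha-1}(2-1/p^{\alpha})$ refinement). The only cosmetic difference is that the paper bounds $\|\bar{H}\|_\alpha\le\|H_1\|_\alpha+\|H_2\|_\alpha$ via Minkowski and then applies convexity to the sum, whereas you apply the pointwise inequality $(A_i+B_i)^\alpha\le 2^{\alpha-1}(A_i^\alpha+B_i^\alpha)$ on each processor and sum; both routes yield the same $2^{\alpha-1}$ factor, and your super-additivity bound $\sum_{j\in F}w_j^\alpha\le OPT\cdot C^{\alpha-1}$ is exactly the content of the paper's inequality~(7).
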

\begin{proof}
This can be concluded using the results of [19], we omit here.
\end{proof}
Next we improve this result by analyzing carefully for the SEMRPP problem by generalizing the result of \textit{Lemma $5$}.
\begin{theorem}
$(\romannumeral1)$ Relaxation-Dependent rounding finds an $2^{\alpha-1}(2-\frac{1}{p^{\alpha}})$-approximation to the optimal schedule for the SEMRPP problem in the continuous model in polynomial time, where $p=max_{\mathcal{M}_j}|\mathcal{M}_j|{\leq}m$. $(\romannumeral2)$ For any processor $P_i$, $\Sigma_{\mathcal{J}}\bar{x}_{ij}w_j<\Sigma_{\mathcal{J}}x_{ij}^*w_j+max_{\mathcal{J}:x_{ij}^*{\in}(0,1)}w_j$, $x_{ij}^*$ is the fractional task assignment at the beginning of the second phase. (i.e., extra maximal execution cycles linear constraints are violated only by $max_{\mathcal{J}:x_{ij}^*{\in}(0,1)}w_j$)
\end{theorem}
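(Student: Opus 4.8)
The plan is to establish part $(\romannumeral2)$ first, by a direct structural analysis of the two rounding phases, and then to deduce part $(\romannumeral1)$ from it by convexity together with the usual relaxation lower bound.

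For $(\romannumeral2)$, I would first verify that the cycle-breaking phase changes no processor load: on a cycle $C$, each processor vertex is incident to exactly one edge of $C_1$ and one of $C_2$, so adding ${\epsilon}$ to one and subtracting ${\epsilon}$ from the other is load-neutral; symmetrically, each task vertex keeps total incident weight $w_j$, so $\sum_i x_{ij}=1$ and $0\le x_{ij}\le 1$ stay valid. Hence at the start of the second phase the load of $P_i$ is exactly $\sum_j x^*_{ij}w_j$ with $x^*$ as in the statement, and $G(x^*)$ is a forest. Next I would observe that every surviving task node has degree at least $2$, since its incident weights all lie strictly between $0$ and $w_j$ and sum to $w_j$; therefore, in each tree rooted at a task node, the leaves are processors and every task node has at least one processor child, so the matching that sends each task to one of its children covers all task nodes and is legal (a processor has a unique parent, hence is a child of at most one task). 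Finally I would bound $\bar L_i:=\sum_j\bar x_{ij}w_j$: the integral edges of $P_i$ contribute the same amount to $\bar L_i$ and to $L_i^*:=\sum_j x^*_{ij}w_j$; among the fractional edges of $P_i$, $\bar x_{ij}=1$ is possible only for the edge to $P_i$'s parent task $j_0$, and only if $P_i=M(j_0)$, because every child task $j'$ of $P_i$ is matched to one of its own (deeper) children, so the edge $(P_i,j')$ is zeroed. Thus $\bar L_i\le(\text{integral load})+w_{j_0}$ while $L_i^*\ge(\text{integral load})+x^*_{ij_0}w_{j_0}$ with $x^*_{ij_0}\in(0,1)$; subtracting gives $\bar L_i< L_i^*+(1-x^*_{ij_0})w_{j_0}\le L_i^*+\max_{j:x^*_{ij}\in(0,1)}w_j$, which is $(\romannumeral2)$ (and $\bar L_i=L_i^*$ when $P_i$ carries no fractional task).

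For $(\romannumeral1)$, write the energy of a load vector $\vec L$ as $F(\vec L)=\sum_i L_i^{\alpha}/C^{\alpha-1}$. I would use two lower bounds on the optimum: $F(\vec L^*)\le OPT$ because the relaxed convex program is a relaxation of $(P_1)$, and $\sum_j w_j^{\alpha}/C^{\alpha-1}\le OPT$ by super-additivity of $t\mapsto t^{\alpha}$ (each task sits on a single processor in the integral optimum). By $(\romannumeral2)$, for each $i$ either $\bar L_i\le L_i^*$, or $\bar L_i\le L_i^*+w_{\sigma(i)}$ where $\sigma(i)$ is the unique fractional task rounded onto $P_i$, and the tasks $\{\sigma(i)\}$ are pairwise distinct (a task is matched to at most one processor). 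Applying $(a+b)^{\alpha}\le 2^{\alpha-1}(a^{\alpha}+b^{\alpha})$ termwise gives $F(\bar{\vec L})\le 2^{\alpha-1}\bigl(F(\vec L^*)+\sum_i w_{\sigma(i)}^{\alpha}/C^{\alpha-1}\bigr)$, and the crude estimate $\sum_i w_{\sigma(i)}^{\alpha}\le\sum_j w_j^{\alpha}$ already reproduces the factor $2^{\alpha}$ of Lemma 5. To improve this to $2^{\alpha-1}(2-1/p^{\alpha})$, I would exploit that every rounded task $\sigma(i)$ is split by $x^*$ over at most $p=\max_j|\mathcal M_j|$ eligible processors (hence retains a fraction at least $1/p$ on one of them), and combine this with the discount factor $(1-x^*_{i\sigma(i)})$ already available from $(\romannumeral2)$; charging each $w_{\sigma(i)}^{\alpha}$ to the loads $(L_{i'}^*)^{\alpha}$ of the eligible processors carrying its fractional pieces — using power-mean on the $\le p$ pieces, and super-additivity to bundle the pieces lying on a fixed processor into at most $(L_{i'}^*)^{\alpha}$ — should replace one full copy of $OPT$ by a $(1-1/p^{\alpha})$-fraction of it.

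The step I expect to be the real obstacle is exactly this last accounting: making the charging of $\sum_i w_{\sigma(i)}^{\alpha}$ against $\{(L_{i'}^*)^{\alpha}\}$ sharp enough to land on the precise constant $2-1/p^{\alpha}$ rather than a weaker $1+p^{\alpha-1}$-type bound, and checking that no over-counting occurs when several rounded tasks share an eligible processor. The structural facts proved in $(\romannumeral2)$ — one extra task per processor, the extra tasks pairwise distinct, and the $(1-x^*_{i\sigma(i)})$ discount on the extra weight — are precisely what should make this accounting go through, so I would keep $(\romannumeral2)$ and its proof in a form from which the constants in $(\romannumeral1)$ can be read off directly.
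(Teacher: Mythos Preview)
Your treatment of part $(\romannumeral2)$ is correct and in fact more careful than the paper's, which dispatches it in two lines. The issue is in part $(\romannumeral1)$, specifically in how you set up the decomposition before applying $(a+b)^{\alpha}\le 2^{\alpha-1}(a^{\alpha}+b^{\alpha})$.

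You split $\bar L_i\le L_i^*+w_{\sigma(i)}$ and take $a_i=L_i^*$. This reaches the $2^{\alpha}$ factor, but it cannot reach $2^{\alpha-1}(2-1/p^{\alpha})$ along the charging line you sketch. The reason is that $L_i^*$ already contains the fractional pieces of all the rounded tasks, so when you try to ``charge each $w_{\sigma(i)}^{\alpha}$ to the loads $(L_{i'}^*)^{\alpha}$'' you are charging into budget you have already fully spent via $\sum_i (L_i^*)^{\alpha}\le OPT$. Concretely, your bound reads $F(\bar L)\le 2^{\alpha-1}\bigl(\|H^*\|_{\alpha}^{\alpha}+\|H_2\|_{\alpha}^{\alpha}\bigr)$, and from $\|H^*\|_{\alpha}^{\alpha}\le OPT$ and $\|H_2\|_{\alpha}^{\alpha}\le OPT$ you can extract nothing better than $2^{\alpha}\,OPT$; the discount $(1-x^*_{i\sigma(i)})$ from $(\romannumeral2)$ does not help either, since $x^*_{i\sigma(i)}$ may be arbitrarily close to $0$.

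The paper's decomposition is the one that makes the constant come out. Write $\bar H=H_1+H_2$ \emph{exactly}, where $(H_1)_i$ is the load from tasks with $x^*_{ij}=1$ after cycle breaking and $(H_2)_i=w_{\sigma(i)}$ (or $0$). The point is that $(H_1)_i$ is strictly smaller than $L_i^*$: the gap is precisely the fractional mass on $P_i$. Super-additivity of $t\mapsto t^{\alpha}$ then gives, per processor and then per task,
\[
\|H^*\|_{\alpha}^{\alpha}\;\ge\;\|H_1\|_{\alpha}^{\alpha}+\sum_i\sum_{j\in\mathcal J_1}(f_{ij})^{\alpha}
\;\ge\;\|H_1\|_{\alpha}^{\alpha}+\sum_{j\in\mathcal J_1}\Bigl(\tfrac{w_j}{p}\Bigr)^{\alpha}
\;=\;\|H_1\|_{\alpha}^{\alpha}+\tfrac{1}{p^{\alpha}}\|H_2\|_{\alpha}^{\alpha},
\]
the last step being exactly your ``power-mean on the $\le p$ pieces'' observation, but now applied where it helps. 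Combining this with $\|\bar H\|_{\alpha}\le\|H_1\|_{\alpha}+\|H_2\|_{\alpha}$ and the convexity bound yields
\[
\|\bar H\|_{\alpha}^{\alpha}\le 2^{\alpha-1}\bigl(\|H_1\|_{\alpha}^{\alpha}+\|H_2\|_{\alpha}^{\alpha}\bigr)
\le 2^{\alpha-1}\bigl(\|H^*\|_{\alpha}^{\alpha}+(1-\tfrac{1}{p^{\alpha}})\|H_2\|_{\alpha}^{\alpha}\bigr)
\le 2^{\alpha-1}\bigl(2-\tfrac{1}{p^{\alpha}}\bigr)\,OPT,
\]
using $\|H^*\|_{\alpha}^{\alpha}\le OPT$ and $\|H_2\|_{\alpha}^{\alpha}=\sum_{j\in\mathcal J_1}w_j^{\alpha}\le OPT$ (the latter because $H_2$ places one task per processor and is thus dominated by any integral schedule for the full instance). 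So the ingredient you identified as the ``real obstacle'' is not a delicate charging at all; it is simply choosing $a_i=(H_1)_i$ rather than $a_i=L_i^*$, so that the fractional mass can be counted once on the $H^*$ side instead of being absorbed into $a_i$.
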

\begin{proof}
$(\romannumeral1)$ Denote the optimal solution for the SEMRPP problem as $OPT$, $H^*$ as the fractional schedule obtained after breaking all cycles and $\bar{H}$ as the schedule returned by the algorithm. Moreover, denote by $H_1$ the schedule consisting of the tasks assigned in the first step, i.e., $x_{ij}^*=1$ right after breaking the cycles and by $H_2$ the schedule consisting of the tasks assigned in the second rounding step, i.e., set $\bar{x}_{ij}=1$ by the matching process. We have $\|H_1\|_{\alpha}{\leq}\|H^*\|_{\alpha}{\leq}\|OPT\|_{\alpha}$ \footnote{In $H_1$ schedule, when the loads of $m$ processors is $\{l_1^{h1},l_2^{h1},...,l_m^{h1}\}$, $\|H_1\|_{\alpha}$ means $((l_1^{h1})^{\alpha}+(l_2^{h1})^{\alpha}+...+(l_m^{h1})^{\alpha})^{\frac{1}{\alpha}}$}, where the first inequality follows from the fact that $H_1$ is a sub-schedule of $H^*$ and the second inequality results from $H^*$ being a fractional optimal schedule compared with $OPT$ which is an integral schedule. We consider $\|H_1\|_{\alpha}{\leq}\|H^*\|_{\alpha}$ carefully. If $\|H_1\|_{\alpha}=\|H^*\|_{\alpha}$, that is all tasks have been assigned in the first step and the second rounding step is not necessary, then we have
$\|H_1\|_{\alpha}=\|H^*\|_{\alpha}=\|OPT\|_{\alpha}$. Such that the approximation is 1. Next we consider $\|H_1\|_{\alpha}<\|H^*\|_{\alpha}$, so there are some tasks assigned in the second rounding step, w.l.o.g., denote as $\mathcal{J}_1=\{J_1,...,J_k\}$. We assume the fraction of task $J_j$ assigned on processor $P_i$ is $f_{ij}$ and the largest eligible processor set size $p=max_{\mathcal{M}_j}|\mathcal{M}_j|{\leq}m$. Then we have
\begin{equation}
\begin{split}
(\|H^*\|_{\alpha})^{\alpha}
&=\sum_{i=1}^m(\Sigma_{J_j:x_{ij}^*=1}w_j+\Sigma_{J_j{\in}\mathcal{J}_1}f_{ij})^{\alpha}\\
&{\geq}\sum_{i=1}^m(\Sigma_{J_j:x_{ij}^*=1}w_j)^{\alpha}+\sum_{i=1}^m(\Sigma_{J_j{\in}\mathcal{J}_1}f_{ij})^{\alpha}\\
&=(\|H_1\|_{\alpha})^{\alpha}+\sum_{i=1}^m(\Sigma_{J_j{\in}\mathcal{J}_1}f_{ij})^{\alpha}\\
&{\geq}(\|H_1\|_{\alpha})^{\alpha}+\sum_{i=1}^{m}\sum_{j=1}^{k}(f_{ij})^{\alpha}\\
&=(\|H_1\|_{\alpha})^{\alpha}+\sum_{j=1}^{k}\sum_{i=1}^{m}(f_{ij})^{\alpha}\\
&{\geq}(\|H_1\|_{\alpha})^{\alpha}+\sum_{j=1}^k(\frac{\sum_{i=1}^{m}f_{ij}}{p})^{\alpha}\\
&=(\|H_1\|_{\alpha})^{\alpha}+\frac{1}{p^{\alpha}}\sum_{j=1}^{k}(w_j)^{\alpha}
\end{split}
\end{equation}
From the fact that $H_2$ schedules only one task per processor, thus optimal integral assignment for the subset of tasks it assigns and certainly has cost smaller than any integral assignment for the whole set of tasks. In a similar way we have
\begin{equation}
(\|H_2\|_{\alpha})^{\alpha}=\sum_{j=1}^k(w_j)^{\alpha}{\leq}(\|OPT\|_{\alpha})^{\alpha}
\end{equation}
So the inequality $(6)$ can be reduced to
\begin{equation}
(\|H^*\|_{\alpha})^{\alpha}{\geq}(\|H_1\|_{\alpha})^{\alpha}+\frac{1}{p^{\alpha}}(\|H_2\|_{\alpha})^{\alpha}
\end{equation}
then
\begin{equation*}
\begin{split}
(\|\bar{H}\|_{\alpha})^{\alpha}&=(\|H_1+H_2\|_{\alpha})^{\alpha}
{\leq}(\|H_1\|_{\alpha}+\|H_2\|_{\alpha})^{\alpha}\\
&=2^{\alpha}(\frac{\|H_1\|_{\alpha}+\|H_2\|_{\alpha}}{2})^{\alpha}\\
&{\leq}2^{\alpha}(\frac{1}{2}(\|H_1\|_{\alpha})^{\alpha}+\frac{1}{2}(\|H_2\|_{\alpha})^{\alpha})\\
&{\leq}2^{\alpha-1}((\|H^*\|_{\alpha})^{\alpha}-\frac{1}{p^{\alpha}}(\|H_2\|_{\alpha})^{\alpha}+(\|H_2\|_{\alpha})^{\alpha})\\
&{\leq}2^{\alpha-1}(2-\frac{1}{p^{\alpha}})(\|OPT\|_{\alpha})^{\alpha}
\end{split}
\end{equation*}
So
\begin{equation*}
\frac{(\|\bar{H}\|_{\alpha})^{\alpha}}{(\|OPT\|_{\alpha})^{\alpha}}{\leq}2^{\alpha-1}(2-\frac{1}{p^{\alpha}})
\end{equation*}
Which concludes the proof that the schedule $\bar{H}$  guarantees a $2^{\alpha-1}(2-\frac{1}{p^{\alpha}})$-approximation to optimal solution for the SEMRPP problem and can be found in polynomial time.

$(\romannumeral2)$ Seen from above, we also have
\begin{equation*}
\Sigma_{J_j{\in}\mathcal{J}}\bar{x}_{ij}w_j
<\Sigma_{J_j{\in}\mathcal{J}}x_{ij}^*w_j+max_{J_j{\in}\mathcal{J}:x_{ij}^*{\in}(0,1)}w_j,{\forall}P_i
\end{equation*}
Where the inequality results from the fact that the load of processor $P_i$ in $\bar{H}$ schedule is the load of $H^*$ plus the weight of task matched to it. Because we match each task to one of its child node, i.e., the execution cycle of the adding task $\bar{w}_j<max_{J_j{\in}\mathcal{J}:x_{ij}^*{\in}(0,1)}w_j$.
\end{proof}
Now we discuss the $s_{max}$. First we give Proposition $1$ to feasible and violation relationship.
\begin{proposition}
If $\mathbf{(P_1)}$ has feasible solution for the SEMRPP problem in the continuous model, we may hardly to solve $\mathbf{(P_1)}$ without violating the constraint of the limitation of the maximal execution cycles of processors.
\end{proposition}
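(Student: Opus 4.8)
The plan is to establish the claim by exhibiting one explicit instance of SEMRPP that is feasible for $\mathbf{(P_1)}$ — that is, some integral assignment satisfies constraints $(1)$--$(4)$ — yet on which the Relaxation--Dependent rounding of Section~5 returns a schedule $\bar{H}$ with load $>s_{max}C$ on some processor, so that constraint $(1)$ is violated. Observe first that whenever $\mathbf{(P_1)}$ is feasible the relaxed convex program admits a fractional optimum $x^{*}$, and $x^{*}$ automatically obeys the relaxed form of $(1)$; hence any violation is necessarily introduced by the rounding, and by Theorem~2$(\romannumeral2)$ it amounts to at most $\max_{J_j:x_{ij}^*\in(0,1)}w_j$ on each processor. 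So it suffices to design an instance in which this slack is actually spent.

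For the construction I would fix a common speed cap $s_{max}C=W$ and use two kinds of tasks. \emph{Anchor} tasks have singleton eligibility sets $\mathcal{M}_j=\{P_i\}$ and are sized so that together they pin certain processors at a load just below $W$. \emph{Flexible} tasks are small and have eligibility sets of size $\ge 2$. Because $f(s)=s^{\alpha}$ is strictly convex, the optimum of the relaxed program balances the loads subject to $(1)$, so with the anchors in place it spreads each flexible task fractionally and leaves the anchored processors essentially at $W$. Feasibility of $\mathbf{(P_1)}$ would then be checked directly, by writing down the ``unbalanced but legal'' integral schedule that routes each flexible task entirely onto whichever eligible processor keeps every load at most $W$; the eligibility sets are chosen precisely so that such an assignment exists even though the fractional optimum prefers to spread.

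The last step is to run the rounding on $x^{*}$. After the first phase removes the integral edges, the fractional part of $G(x^{*})$ is a forest whose surviving task nodes all have degree $\ge 2$; by tuning the anchor weights and the flexible eligibility sets I can force one component to be a short path (caterpillar) whose shape makes the only child of some flexible task — once a task node is taken as the root, as the algorithm prescribes — an anchored processor sitting at load $\approx W$. The $\arg\min$ matching of Step~4 then has no alternative and puts that task there, producing load $>W$. If the instance is made symmetric enough this happens for every admissible root, upgrading the conclusion from ``may violate'' to ``must violate''. Evaluating the fractional optimum is a one-parameter convex minimization and is routine.

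The main difficulty is that the dependent rounding is actively trying not to overload processors: its greedy least-load ($\arg\min$) choice normally diverts a fractional task toward the emptier of its eligible machines. Forcing the violation therefore requires engineering the tree topology, through the eligibility graph and the anchor sizes, so that at the decisive moment the near-$W$ processor is literally the unique child of the task being matched, while simultaneously keeping the instance integrally feasible and keeping the fractional optimum strictly cheaper than every integral feasible schedule (so that the second rounding phase is actually entered, rather than the algorithm stopping after the first phase with an automatically feasible answer). Balancing these requirements against one another is the delicate part; once a concrete instance meeting all of them is written down, the remaining verification is elementary arithmetic.
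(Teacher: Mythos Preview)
Your approach is genuinely different from the paper's, and considerably harder than necessary. You try to exhibit an explicit SEMRPP instance on which the Relaxation--Dependent rounding of Section~5 is forced to overload a processor; you engineer anchor tasks, flexible tasks, and a tree topology so that the $\arg\min$ matching has no escape. The paper instead gives a two-line complexity argument: choose $s_{max}C$ tight enough that the \emph{only} integral feasible assignment for $\mathbf{(P_1)}$ is the optimum itself. Then any polynomial-time procedure that always returns a non-violating integral assignment would be an exact polynomial-time algorithm for SEMRPP, contradicting the strong NP-completeness of Lemma~2 (assuming $\mathrm{P}\neq\mathrm{NP}$).

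The paper's argument is both shorter and stronger: it applies to \emph{every} polynomial-time algorithm, not just the particular rounding scheme, and it needs no instance engineering at all. Your route, if completed, would buy something the paper's does not---an explicit witness and an unconditional statement about the specific algorithm rather than one contingent on $\mathrm{P}\neq\mathrm{NP}$---but as written it is only a plan. You describe the shape of the construction, correctly identify the tension (the $\arg\min$ rule actively resists overload), and then defer the ``elementary arithmetic'' of a concrete instance. That deferral is the real gap: the delicate balancing you flag (integral feasibility, strictly fractional optimum so the second phase is entered, and a forced bad matching for every admissible root) is genuine, and until actual numbers are written down it is not established that all of these can be satisfied simultaneously.
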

Obviously, if $\mathbf{(P_1)}$ has a unique feasible solution, i.e., the maximal execution cycles of processors is set to the $OPT$ solution value. Then if we can always solve $\mathbf{(P_1)}$ without violating the constraint, this means we can easily devise an exact algorithm for $\mathbf{(P_1)}$. But we have proof that $\mathbf{(P_1)}$ is NP-Complete in the strong sense. Next, we give a guarantee speed which can be regarded as fast enough on the restricted parallel processors scheduling in the dependent rounding.
\begin{lemma}
Dependent rounding can get the approximation solution without violating the maximal execution cycles of processors constraint when\\ $s_{max}C{\geq}max_{P_i{\in}\mathcal{P}}L_i+max_{J_j{\in}\mathcal{J}}w_j$, where $L_i=\Sigma_{J_j{\in}\mathcal{J}_i}\frac{1}{|\mathcal{M}_j|}w_j$, $\mathcal{J}_i$ is the set of tasks that can be assigned to processor $P_i$.
\end{lemma}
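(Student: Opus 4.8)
The plan is to run the relaxation-dependent rounding on the energy-optimal fractional solution $x^*$ and to bound, via Theorem~2$(\romannumeral2)$, the load it produces on every processor, using the uniform-split assignment as the bridge between $x^*$ and the quantities $L_i$. Under the hypothesis $s_{max}C\geq\max_{P_i}L_i+\max_{J_j}w_j$ the relaxed convex program is feasible, so an optimal $x^*$ exists: the uniform-split assignment $x_{ij}=1/|\mathcal{M}_j|$ for $P_i\in\mathcal{M}_j$ (and $0$ otherwise) meets constraints $(2)$, $(4)$ and $(5)$ by construction, puts load exactly $L_i=\sum_{J_j\in\mathcal{J}_i}\frac{1}{|\mathcal{M}_j|}w_j$ on $P_i$, and since $\max_{P_i}L_i\leq s_{max}C-\max_{J_j}w_j<s_{max}C$ it also meets constraint $(1)$.

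The crux is the bound $\sum_{J_j\in\mathcal{J}}x^*_{ij}w_j\leq\max_k L_k$ for every $P_i$. Here I would use that an energy-optimal fractional assignment is the ``most balanced'' feasible one: since $f(s)=s^\alpha$ is strictly convex for $\alpha>1$, the minimizer of $\sum_i(\text{load}_i)^\alpha$ over the (convex) feasible region also minimizes the maximum processor load among all feasible fractional assignments -- the fractional analogue of the strongly-optimal assignment used in Theorem~1. As the uniform-split assignment is feasible with maximum load $\max_k L_k$, we get $\max_i\sum_{J_j\in\mathcal{J}}x^*_{ij}w_j\leq\max_k L_k$. (Equivalently, if some processor carried more than $\max_k L_k$ cycles under $x^*$, an augmenting-path redistribution of fractional load toward the uniform-split assignment would strictly decrease $\sum_i(\text{load}_i)^\alpha$, contradicting optimality of $x^*$.)

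Next I would observe that neither breaking cycles nor the first rounding phase changes any processor's load: in a bipartite cycle each processor node is incident to exactly one edge of $C_1$ and one of $C_2$, so its load gains $\epsilon$ on one edge and loses $\epsilon$ on the other, while the first rounding phase only commits already-integral values. Hence at the start of the second (matching) phase every processor still carries at most $\max_k L_k$ cycles, and by Theorem~2$(\romannumeral2)$ the load of each $P_i$ in the returned schedule $\bar H$ is strictly less than that amount plus $\max_{J_j:x^*_{ij}\in(0,1)}w_j\leq\max_{J_j}w_j$, i.e.\ strictly less than $\max_k L_k+\max_{J_j}w_j\leq s_{max}C$. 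So $\bar H$ never drives a processor above speed $s_{max}$, and with Theorem~2$(\romannumeral1)$ this is the sought approximate schedule that respects the maximal-execution-cycles constraint.

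I expect the main obstacle to be the crux step: proving that the $\ell_\alpha$-minimizing fractional assignment is simultaneously min-max (makespan) optimal over the restricted-assignment polytope. This does not follow from the norm inequality $\|x^*\|_\alpha\leq\|\text{uniform split}\|_\alpha$ alone; it relies on the exchange structure of that polytope (the same structure that makes a strongly-optimal assignment exist and be unique), so I would either invoke the corresponding lex-optimality result or carry out the augmenting-path argument sketched above.
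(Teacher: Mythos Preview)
Your approach is essentially the same as the paper's. Both arguments reduce the lemma to showing that the maximum processor load in the fractional optimum is at most $\max_k L_k$, and then invoke Theorem~2$(\romannumeral2)$ to absorb the extra $\max_j w_j$ coming from the matching phase.

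For the crux step, the paper does not invoke lex-optimality abstractly but gives exactly the augmenting-path contradiction you sketch: assuming the most-loaded processor $P_1$ carries $H_1>\max_k L_k$, it builds the closure $\mathcal{M}^H$ of processors reachable from $P_1$ through eligible-task edges, uses convexity of the objective to conclude that all processors in $\mathcal{M}^H$ share the same load $H_1$, and then observes that the uniform split already places at least the same total work on $\mathcal{M}^H$ (every task counted in $\sum_{p}H_{h_p}$ has all its eligible processors inside $\mathcal{M}^H$, so it is fully counted in $\sum_q L_{l_q}$ as well), forcing $\sum_p H_{h_p}\le\sum_q L_{l_q}$ and a contradiction. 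This is the concrete instantiation of the ``exchange structure'' you anticipate needing.

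Your write-up is in two respects more careful than the paper's: you explicitly verify that the uniform split is feasible under the hypothesis (so the relaxed program has an optimum $x^*$ at all), and you explicitly argue that breaking cycles leaves every processor's load unchanged. The paper tacitly uses both facts---its vector $\vec H$ is ``at the beginning of the second step'' yet the equality-of-loads step appeals to optimality of $x^*$---so your added bookkeeping is warranted rather than superfluous.
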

\begin{proof}
First we denote a vector $\vec{H}=\{H_1,H_2,...,H_m\}$ in non-ascending sorted order as the execution cycles of $m$ processors at the beginning of the second step. We also denote a vector $\vec{L}=\{L_1,L_2,...,L_m\}$ in non-ascending sorted order as the execution of $m$ processors that  $L_i=\Sigma_{J_j{\in}\mathcal{J}_i}\frac{1}{|\mathcal{M}_j|}w_j$. Now we need to prove $H_1{\leq}L_1$. Suppose we have $H_1>L_1$, w.l.o.g., assume that the processor $P_1$ has the execution cycles of $H_1$. We denote the set of tasks assigned on $P_1$ as $\mathcal{J}_1^H$. Let $\mathcal{M}_1^H$ be the set of processors to which a task, currently fractional or integral assigned on processor $P_1$, can be assigned, i.e., $\mathcal{M}_1^H=\bigcup_{J_j{\in}\mathcal{J}_1^H}\mathcal{M}_j$. Similarly we denote the set of tasks can process on $\mathcal{M}_1^H$ as $\mathcal{J}^H$ and the set of processors $\mathcal{M}^H$ for every task in $P_i{\in}\mathcal{M}_1^H$ can be assigned, We have $\mathcal{M}^H=\bigcup_{J_j{\in}\mathcal{J}^H}\mathcal{M}_j$. W.l.o.g, we denote $\mathcal{M}^H$ as a set $\{h_1,h_2,...,h_k\}(1{\leq}k{\leq}m)$ and also denote a set $\{l_1,l_2,...,l_k\}(1{\leq}k{\leq}m)$ as its respective processors set in $\vec{L}$. According to the convexity of the objective, we get $H_{h_1}=H_{h_2}=...=H_{h_k}$. By our assumption, $H_{h_p}>L_{l_q}$,$\forall{p},\forall{q}$. Then
\begin{equation}
\Sigma_pH_{h_p}>\Sigma_qL_{l_q}
\end{equation}
Note that each integral task (at the beginning of the second step) in the left part of inequality $(9)$ can also have its respective integral task in the right part, but the right part may have some fractional task. So $\Sigma_qL_{l_q}-\Sigma_pH_{h_p}{\geq}0$, i.e., $\Sigma_pH_{h_p}{\leq}\Sigma_qL_{l_q}$, a contradiction to inequality $(9)$. The assumption is wrong, we have $H_1{\leq}L_1$. By Theorem $2$ the maximal execution cycles of dependent rounding $\bar{H}_{max}$, we have
\begin{equation*}
\begin{split}
\bar{H}_{max}&<H_1+max_{J_j{\in}\mathcal{J}:x_{ij}^*{\in}(0,1)}w_j\\
&{\leq}L_1+max_{J_j{\in}\mathcal{J}:x_{ij}^*{\in}(0,1)}w_j\\
&{\leq}L_1+max_{J_j{\in}\mathcal{J}}w_j=max_iL_i+max_{J_j{\in}\mathcal{J}}w_j
\end{split}
\end{equation*}
Finish the proof.
\end{proof}
\section{Experimental Results}
In this section, we provide performance detail of experimental results. To demonstrate the effectiveness of our approaches, we compare 5 values of interest, the optimal fractional solution, the optimal integral solution, the fractional dependent rounding integral (FDR, in the rest of paper, it refers to the solution of our algorithm) solution, the least flexible task (LFJ) solution and the least flexible processor (LFM) solution. We use the CPLEX solver [22] to obtain the optimal integral solution by solving the relevant Integer Programming. For our approximation algorithm, we obtain the optimal fractional solution by CVX solver [23], and then apply the dependent rounding by our algorithm. The results of LFJ and LFM solutions are obtained by following LFJ and LFM algorithms.

LFJ ALGORITHM. The tasks first are sorted in non-decreasing order of the cardinality of the processing sets of them, i.e., by $\vert\mathcal{M}_j\vert$. All the tasks are then scheduled in this order by sequential list. Next the task is assigned to a processor $P_i$ which has the least load and is in the task's processing set ($P_i\in{\mathcal{M}_j}$). At the last the speed of a processor is set to a value that the processor finishes its load by the time constraint; LFM ALGORITHM. The processors first are sorted in non-decreasing order of the cardinality of the processing task sets of them. The processors are then scheduled in this order by sequential list. Next the processor chooses a task which can be assigned on it and has not been assigned to other processors. At the last the speed of a processor is set to a value that the processor finishes its load by the time constraint. Note that the main difference between LFJ and LFM algorithm is the tasks or the processors as the object to select the processors or the tasks, correspondingly.
\subsection{Simulation Setting}
To evaluate the performance of our algorithm, we create systems consisting of 10 to 50 processors and 50 to 300 tasks. Each task $J_j$ is characterized by two parameters: the mount of the execution cycles $w_j$ and eligibility processing set $\mathcal{M}_j$. $w_j$ is randomly generated in the range $[1,10000]$. We simulate two case for $\mathcal{M}_j$. One is randomly generated from the set $\mathcal{P}$ of processors, and the other is arranged to construct the inclusive processing set restrictions\footnote{Inclusive processing set means that the pair restricted processing sets $\mathcal{M}_j$ and $\mathcal{M}_k$ for any two different tasks, either $\mathcal{M}_j{\subseteq}\mathcal{M}_k$ or $\mathcal{M}_k{\subseteq}\mathcal{M}_j$} [9]. Without loss of generality, the power parameter $\mathbf{\alpha}$ is set as $2$. The maximal speed $s_{max}$ is set to large enough to obtain the feasible solution. We analyse the effect of three different cases: the tightness of time constraint $C$, the ratio $\eta$ of the number of tasks to the number of processors, and the two different eligibility processing sets. All the results are mean values of different runs on an Intel Core I5-2400 CPU with $3.10GHz{\times}4$.
\subsection{Simulation Results}
Figure 1(a) represents the energy consumption of a 10 processors and 27 tasks system when the time constraint is increased. The five curves correspond to 5 values that we mention for comparing at the beginning of this section. Figure 1(b) reports the relative energy consumption ratio of these 5 values when all of them are normalized by the optimal integral. We find some observations from this simulation: $1)$. As shown in the Figure 1(a), 1(b), the energy consumption and the time constraint are in inverse proportion, and each ratio is almost not influenced by different time constraints. These confirms the \textit{Lemma 1} and \textit{Corollary 1}, i.e., each processor executes all tasks that are assigned on it at a uniform speed. So when the time constraint $C$ grows to $k{\times}C$, each processor can lower its speed to $\frac{s}{k}$ to finish the tasks. For $\mathbf{\alpha}=2$, the energy consumption is equal to $\frac{1}{k}$ (=$\frac{k{\times}C{\times}(\frac{s}{k})^2}{Cs^2}=\frac{1}{k}$) proportion of the energy consumption when the time constraint does not grow. Thus each kind energy consumption is influenced by the same proportion to the time constraint variation, when normalized by the optimal integral, the time constraint can be removed. This concludes the Figure 1(b). $2)$. The optimal fractional values are little different from the integral optimal. The Gap is at most $5\%$ in the experiment. This difference can also be observed between the integral optimal and the fractional dependent rounding integral solution, actually it is also within $5\%$ in the experiment. This suggests that the FDR performs much better than the approximation ratio we analysed in Theorem 2. $3)$. The figure confirms the superiority of the fractional dependent rounding integral solution, as it can reach $10\%$ better than the LFJ and LFM solution. After checking the maximum processor load, we find the result of the fractional dependent rounding is close to the integral optimal. This suggests the fractional dependent rounding integral solution can more efficiently balance the load between each eligibility processing set.

Figure 2(a) depicts the normalized energy consumption ratios for different solutions on varying ratios $\eta$ of the number of tasks to the number of processors. When the ratio $\eta$ is small, the difference between the normalized ratios is much larger. This can be explained by the fact that only one task be improperly assigned, the energy consumption would be excessively oscillated if $\eta$ is small. As the $\eta$ increasing, the shake will reduce because an improper task assignment will not influence so much. Figure 2(b) illustrates the normalized energy consumption ratios of a 14 processors and 35 tasks system for two eligibility processing sets. As shown in the figure, the different eligibility processing sets can influence the performance of the algorithms. The FDR and LFJ solution perform better in random processing set case. This can be explained by that in the LFJ and FDR (At the last stage when rounding fractional tasks to processors) solution the task chooses its processor, and the random restriction help the task do proper choice, but the difference is not so obvious. On the contrary, the LFM solution in which a processor chooses the tasks performs much better in inclusive processing set case. This can be explained by that the processor which has the less eligible tasks first select a task, if it does a improper choice, the subsequent processors will not influence much as they have more tasks to choose in inclusive processing set case. And it is interesting to observe that the algorithms perform much differently in random condition and regular condition.
\begin{figure}
\centering
\subfigure{
\begin{minipage}[t]{1\linewidth}
\includegraphics[width=2.4in]{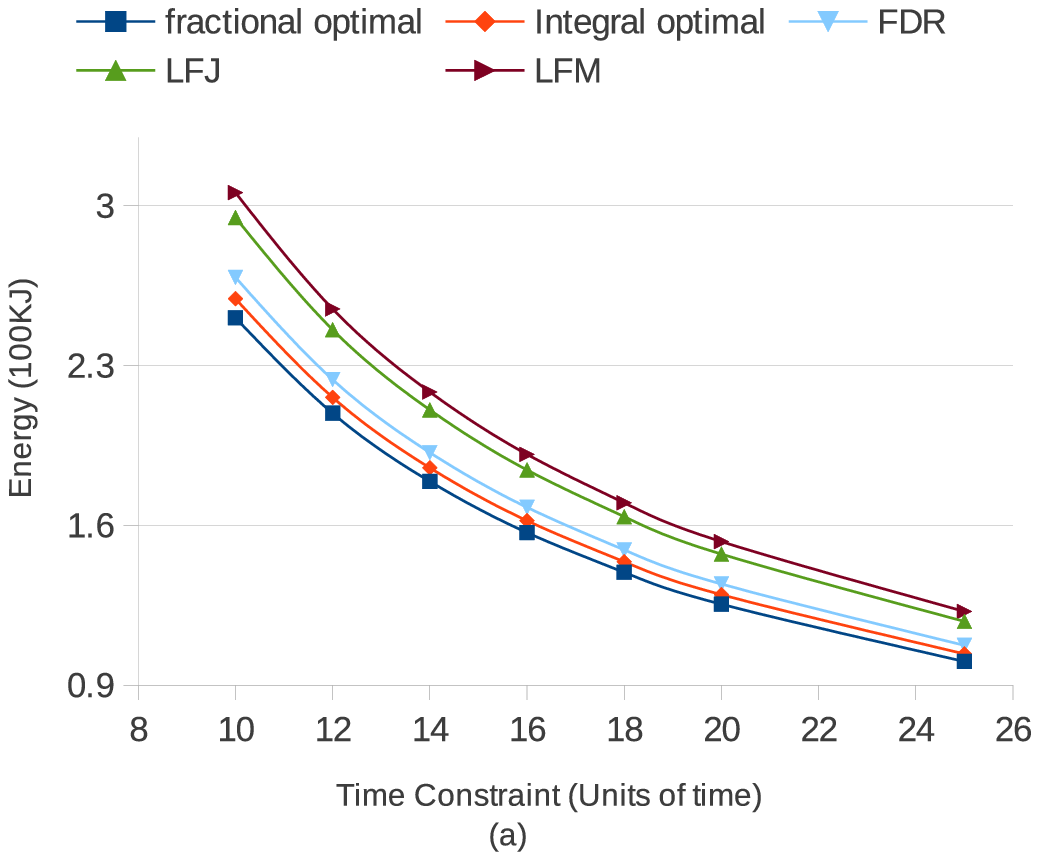}
\includegraphics[width=2.4in]{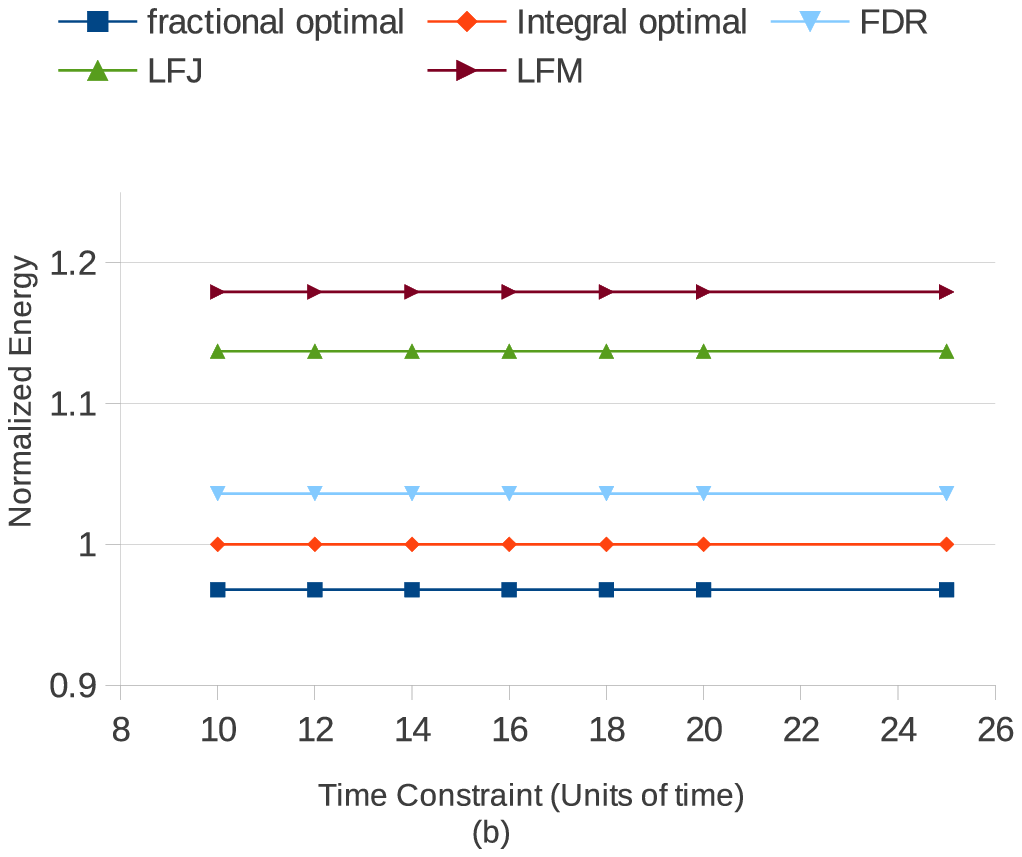}
\end{minipage}
}
\caption{(a) Energy consumption and (b) Normalized energy consumption ratio on time constraint.}\label{fig:graph}
\end{figure}
\begin{figure}
\centering
\subfigure{
\begin{minipage}[t]{1\linewidth}
\includegraphics[width=2.4in]{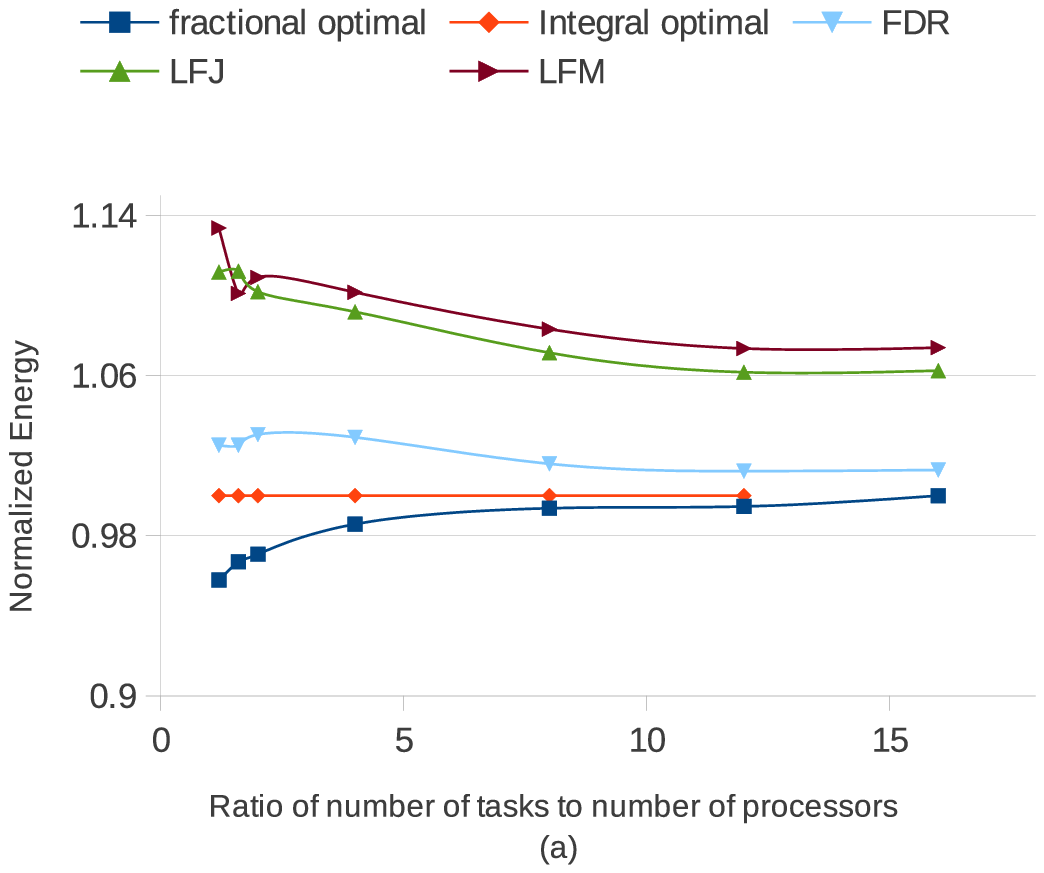}
\includegraphics[width=2.4in]{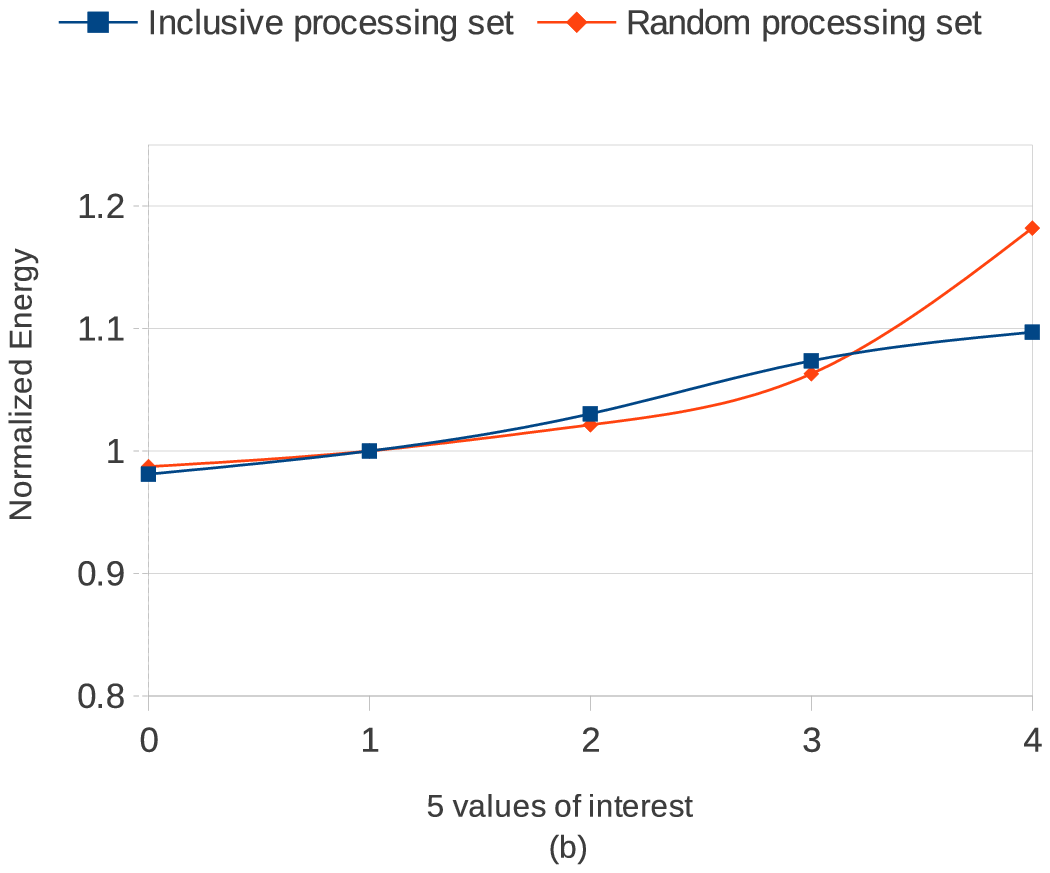}
\end{minipage}
}
\caption{(a) Normalized energy consumption ratio on varying ratios $\eta$ (The optimal integral value misses at the last point for it can not be obtained. The other values are normalized by the optimal fractional value.) and (b) Normalized energy consumption ratio on two eligibility processing sets (0-4 represent each value, respectively).}\label{fig:graph}
\end{figure}

The average running time for the optimal fractional solution solved by CVX, the fractional dependent rounding integral solution solved by CVX and rounding, the LFJ solution solved by LFJ algorithm and the LFM solution solved by LFM algorithm are fast (In our experiment it took at most several minutes) to all the instances presented so far. But the optimal integral solution solved by CPLEX takes more than one day in large systems. For larger systems, the optimal integral solution has trouble in both memory and running time. Note that during all the experiments, the FDR solution is efficient than LFJ and LFM solution. This suggests that our solution could assign tasks more properly in every instance, and solve the SEMRPP problem efficiently due to high quality and low computational time.

We emphasize that, as per the latest reports [24, 25], every year the energy costs are on the order of billions of dollars. Given this, a reduction by even a few percent in energy cost can result in savings of billions of dollars.
\section{Conclusion}
In this paper we explore algorithmic instruments leading to reduce energy consumption on restricted parallel processors. We aim at minimizing the sum of energy consumption while the speed scaling method is used to reduce energy consumption under the execution time constraint $(C_{max})$. We first assess the complexity of scheduling problem under speed and restricted parallel processors settings. We present a polynomial-time approximation algorithm with a $2^{\alpha-1}(2-\frac{1}{p^{\alpha}})$-approximation $(p=max_{\mathcal{M}_j}|\mathcal{M}_j|{\leq}m)$ factor for the general case that the tasks have arbitrary size of execution cycles. Specially, when the tasks have a uniform size, we propose an optimal scheduling algorithm with time complexity $O(mn^3logn)$. We evaluate the performance of our algorithm by a set of simulated experiments. It turns out that our solution is very close to the optimal solution. This confirms our algorithm could provide efficient scheduling for the SEMRPP problem.

\end{document}